\newcommand{\longversion}[1]{#1}
\newcommand{\shortversion}[1]{}
\title{Knapsack with Vertex Cover, Set Cover, and Hitting Set} 
 \author{Palash Dey}{Indian Institute of Technology Kharagpur, India \and \url{https://cse.iitkgp.ac.in/~palash/} }{palash.dey@cse.iitkgp.ac.in}{ https://orcid.org/0000-0003-0071-9464}{}
\author{Ashlesha Hota}{Indian Institute of Technology Kharagpur, India}{ashleshahota.23@kgpian.iitkgp.ac.in}{https://orcid.org/0009-0009-8805-4583}{}
\author{Sudeshna Kolay}{Indian Institute of Technology Kharagpur, India \and \url{https://cse.iitkgp.ac.in/~skolay/}}{skolay@cse.iitkgp.ac.in}{https://orcid.org/0000-0002-2975-4856}{}
\author{Sipra Singh}{Indian Institute of Technology Kharagpur, India }{sipra.singh@iitkgp.ac.in}{}{}
\authorrunning{Anonymous authors} 
\keywords{Knapsack, vertex cover, minimal vertex cover, minimum vertex cover, hitting set, set cover, algorithm, approximation algorithm, parameterized complexity} 
\newcommand{\tabitem}{~~\llap{\textbullet}~~}
\newdimen\prevdp
\def\leftlabel#1{\noalign{\prevdp=\prevdepth
   \kern-\prevdp\nointerlineskip\vbox to0pt{\vss\hbox{\ensuremath{#1}}}\kern\prevdp}}
\newcommand\numberthis{\addtocounter{equation}{1}\tag{\theequation}}
\newcommand{\NP}{\ensuremath{\mathsf{NP}}\xspace}
\newcommand{\NPC}{\ensuremath{\mathsf{NP}}\text{-complete}\xspace}
\newcommand{\NPH}{\ensuremath{\mathsf{NP}}\text{-hard}\xspace}
\newcommand{\el}{\ensuremath{\ell}\xspace}
\newcommand{\FPT}{\ensuremath{\mathsf{FPT}}\xspace}
\newcommand{\OPT}{\ensuremath{\mathsf{OPT}}\xspace}
\newcommand{\ALG}{\ensuremath{\mathsf{ALG}}\xspace}
\newcommand{\Pb}{\ensuremath{\mathsf{P}}\xspace}
\newcommand{\tw}{\text{tw}\xspace}
\let\oldlambda\lambda
\renewcommand{\lambda}{\ensuremath{\oldlambda}\xspace}
\let\oldalpha\alpha
\renewcommand{\alpha}{\ensuremath{\oldalpha}\xspace}
\let\oldDelta\Delta
\renewcommand{\Delta}{\ensuremath{\oldDelta}\xspace}
\newcommand{\yes}{{\sc yes}\xspace}
\newcommand{\vcknapsack}{{\sc Vertex Cover Knapsack}\xspace}
\newcommand{\vcknapsackbudget}{{\sc $k$-Vertex Cover Knapsack}\xspace}
\newcommand{\minimumvcknapsack}{{\sc Minimum Vertex Cover Knapsack}\xspace}
\newcommand{\minimalvcknapsack}{{\sc Minimal Vertex Cover Knapsack}\xspace}
\newcommand{\kp}{\text{\sc Knapsack}\xspace}
\newcommand{\vc}{\text{\sc Vertex Cover}\xspace}
\newcommand{\maxminimalvc}{{\sc Maximum Minimal Vertex Cover}\xspace}
\newcommand{\setc}{\text{\sc Set Cover}\xspace}
\newcommand{\setck}{\text{\sc Set Cover Knapsack}\xspace}
\newcommand{\hs}{\text{d-\sc Hitting Set}\xspace}
\newcommand{\hsk}{\text{d-\sc Hitting Set Knapsack}\xspace}
\newcommand{\twd}{{\text{\sc Treewidth}}\xspace}
\renewcommand{\AA}{\ensuremath{\mathcal A}\xspace}
\newcommand{\BB}{\ensuremath{\mathcal B}\xspace}
\newcommand{\CC}{\ensuremath{\mathcal C}\xspace}
\newcommand{\EE}{\ensuremath{\mathcal E}\xspace}
\newcommand{\FF}{\ensuremath{\mathcal F}\xspace}
\newcommand{\GG}{\ensuremath{\mathcal G}\xspace}
\newcommand{\II}{\ensuremath{\mathcal I}\xspace}
\newcommand{\JJ}{\ensuremath{\mathcal J}\xspace}
\newcommand{\NN}{\ensuremath{\mathcal N}\xspace}
\newcommand{\OO}{\ensuremath{\mathcal O}\xspace}
\renewcommand{\SS}{\ensuremath{\mathcal S}\xspace}
\newcommand{\TT}{\ensuremath{\mathcal T}\xspace}
\newcommand{\UU}{\ensuremath{\mathcal U}\xspace}
\newcommand{\VV}{\ensuremath{\mathcal V}\xspace}
\newcommand{\WW}{\ensuremath{\mathcal W}\xspace}
\newcommand{\XX}{\ensuremath{\mathcal X}\xspace}
\newcommand{\NB}{\ensuremath{\mathbb N}\xspace}
\newcommand{\eps}{\ensuremath{\varepsilon}\xspace}
\renewcommand{\epsilon}{\eps}
\newcommand{\ignore}[1]{}
\newcommand{\pr}{\ensuremath{\prime}}
\renewcommand{\leq}{\leqslant}
\renewcommand{\geq}{\geqslant}
\renewcommand{\ge}{\geqslant}
\renewcommand{\le}{\leqslant}
\setlist[enumerate]{labelwidth=!, labelindent=0pt,topsep=0pt}
\algnewcommand\algorithmicinput{\textbf{Input:}}
\algnewcommand\INPUT{\item[\algorithmicinput]}
\algnewcommand\algorithmicoutput{\textbf{Output:}}
\algnewcommand\OUTPUT{\item[\algorithmicoutput]}
\algnewcommand{\LineComment}[1]{\State \(\triangleright\) #1}
\crefname{theorem}{Theorem}{\bf Theorems}
\crefname{observation}{Observation}{\bf Observations}
\crefname{corollary}{Corollary}{\bf Corollary}
\crefname{lemma}{Lemma}{\bf Lemmata}
\crefname{corollary}{Corollary}{\bf Corollaries}
\crefname{proposition}{Proposition}{\bf Propositions}
\crefname{definition}{Definition}{\bf Definitions}
\crefname{claim}{Claim}{\bf Claims}
\crefname{reductionrule}{Reduction rule}{\bf Reduction rules}
\begin{document}

\maketitle


\begin{abstract}
In the \vcknapsack problem, we are given an undirected graph $\GG=(\VV,\EE)$, with weights $(w(u))_{u\in\VV}$ and values $(\alpha(u))_{u\in\VV}$ of the vertices, the size $s$ of the knapsack, a target value $p$, and the goal is to compute if there exists a vertex cover $\UU\subseteq\VV$ with total weight at most $s$, and total value at least $p$. This problem simultaneously generalizes the classical vertex cover and knapsack problems. We show that this problem is strongly \NPC. However, it admits a pseudo-polynomial time algorithm for trees. In fact, we show that there is an algorithm that runs in time $\OO\left(2^{\tw}\cdot n\cdot {\sf min}\{s^2,p^2\}\right)$ where $\tw$ is the treewidth of \GG. Moreover, we can compute a $(1-\eps)$- approximate solution for maximizing the value of the solution given the knapsack size as input in time $\OO\left(2^{\tw}\cdot poly(n,1/ \epsilon,\log\left(\sum_{v\in\VV}\alpha(v)\right))\right)$ and a $(1+\eps)$-approximate solution to minimize the size of the solution given a target value as input, in time $\OO\left(2^{\tw}\cdot poly(n,1/ \epsilon,\log\left(\sum_{v\in\VV}w(v)\right))\right)$ for every $\eps>0$. Restricting our attention to polynomial-time algorithms only, we then consider polynomial-time algorithms and present a $2$ factor polynomial-time approximation algorithm for this problem for minimizing the total weight of the solution, which is optimal up to additive $o(1)$ assuming Unique Games Conjecture (UGC). On the other hand, we show that there is no $\rho$ factor polynomial-time approximation algorithm for maximizing the total value of the solution given a knapsack size for any $\rho>1$ unless $\Pb=\NP$.

Furthermore, we show similar results for the variants of the above problem when the solution \UU needs to be a minimal vertex cover, minimum vertex cover, and vertex cover of size at most $k$ for some input integer $k$. Then, we consider set families (equivalently hypergraphs) and study the variants of the above problem when the solution needs to be a set cover and hitting set. We show that there are $H_d$ and $f$ factor polynomial-time approximation algorithms for \setck where $d$ is the maximum cardinality of any set and $f$ is the maximum number of sets in the family where any element can belongs in the input for minimizing the weight of the knapsack given a target value, and a $d$ factor polynomial-time approximation algorithm for \hsk which are optimal up to additive $o(1)$ assuming UGC. On the other hand, we show that there is no $\rho$ factor polynomial-time approximation algorithm for maximizing the total value of the solution given a knapsack size for any $\rho>1$ unless $\Pb=\NP$ for both \setck and \hsk.
\end{abstract}
\section{Introduction}
\label{sec:introduction}

A {\em vertex cover} of an undirected graph is a set of vertices that contains at least one endpoint of every edge. For a real-world application of vertex cover, consider a city network \GG where the vertices are the major localities of the city, and we have an edge between two vertices if the distance between their corresponding locations is less than, say, five kilometers. A retail chain wants to open a few stores in the city in such a way that everyone (including the people living between any two major localities) in the city has a retail shop within five kilometers. The cost of opening a store depends on location. We can see that the company needs to compute a minimum weight vertex cover of the \GG to open stores with the minimum total cost, where the weight of a vertex is the cost of opening a store at that location. However, each store has the potential to generate non-core revenue, say from advertising. In such a scenario, the company may like to maximize the total non-core revenue without compromising its core business, which it will accomplish by opening stores at the vertices of a vertex cover. This is precisely what we call \vcknapsack. In this problem, we are given an undirected graph $\GG=(\VV,\EE)$, with weights $(w(u))_{u\in\VV}$ and values $(\alpha(u))_{u\in\VV}$ of the vertices, the size $s$ of the knapsack, a target value $p$, and the goal is to compute if there exists a vertex cover $\UU\subseteq\VV$ with $w(\UU)=\sum_{u\in\UU} w(u) \le s$, and $\alpha(\UU)=\sum_{u\in\UU} \alpha(u) \ge p$. 

We study several natural variations of this problem: (i) \vcknapsackbudget where the solution should be a vertex cover of size at most $k$ for an integer input $k$, (ii) \minimalvcknapsack where the solution should be a minimal vertex cover, and (iii) \minimumvcknapsack where the solution should be a minimum vertex cover. 

We then consider the hypergraphs or equivalently set families. There, we consider the knapsack generalization of the set cover and hitting set problems. In \setck, we are given a collection $S_1,\ldots,S_m$ of subsets of a universe $[n]$, with weights $(w(j))_{j\in[m]}$ and values $(\alpha(j))_{j\in[m]}$ for the sets, the size $s$ of the knapsack, a target value $p$, and the goal is to compute if there exists a set cover of total weight at most $s$ and total value at least $p$. On the other hand, we have a collection $S_1,\ldots,S_m$ of $d$ sized subsets of a universe $[n]$ with weights  in \hsk $(w(j))_{j\in[n]}$ and values $(\alpha(j))_{j\in[n]}$ for the elements, the size $s$ of the knapsack, a target value $p$, and the goal is to compute if there exists a hitting set of total weight at most $s$ and total value at least $p$.


\subsection{Contributions}

We study these problems under the lens of classical complexity theory, parameterized complexity, polynomial-time approximation, and FPT-approximation. We summarize our results in \Cref{tab:contribution}.


 We now give a high-level overview of the techniques used in our results. For the $f$-approximation algorithm for \setck, the dual LP of a configuration LP relaxation has two types of constraints: intuitively speaking, one set of constraints handles the knapsack part while the other set takes care of the set cover requirement. We first increase some dual variables iteratively so that some of the dual constraints corresponding to the knapsack part of the problem become tight. We pick the sets corresponding to these constraints. If this gives a valid set cover, then we are done. Otherwise, we increase some dual constraints iteratively corresponding to the set cover part of the problem until we satisfy the set cover requirements. The first part of our $H_d$-approximation algorithm is the same as the $f$-approximation algorithm. In the second part, we use the greedy algorithm for the set cover problem to pick more sets if the sets picked in the first part do not form a set cover.
 
\begin{longtable}{c|c}\hline
        Knapsack variant & Results\\\hline\hline
		Vertex Cover & \makecell[l]{\tabitem Strongly \NPC (\Cref{thm:vckp-gen-npc})\\\tabitem \NPC for star graphs (\Cref{vck-trees-npc})\\\tabitem Poly-time 2-approx. to minimize weight$^\dagger$ (\Cref{2factorvc})\\\tabitem Poly-time $\rho$-inapprox. to maximize value$^\star$ (\Cref{hardnessofapprox})\\\tabitem $\OO\left(2^{\tw}\cdot n^{\OO(1)}\cdot {\sf min}\{s^2,p^2\}\right)$ (\Cref{vckparameterzied})\\\tabitem $\OO\left(2^{\tw}\cdot poly(n,1/ \epsilon,\log\left(\sum_{v\in\VV}\alpha(v)\right))\right)$ time, $(1 - \epsilon)$\\ approximation to maximize value$^\star$ (\Cref{vc-ptas})\\\tabitem $\OO\left(2^{\tw}\cdot poly(n,1/ \epsilon,\log\left(\sum_{v\in\VV}w(v)\right))\right)$ time, $(1 + \epsilon)$\\ approximation to minimize weight$^\ddagger$ (\Cref{fptas-weight}) } \\\hline
        
		Vertex Cover of size $\le k$ & \makecell[l]{\tabitem Strongly \NPC (\Cref{thm:vckb-gen-npc})\\\tabitem\NPC for star graphs (\Cref{k-vck-trees-npc})\\\tabitem$\OO\left(2^{\tw}\cdot n^{\OO(1)}\cdot {\sf min}\{s^2,p^2\}\right)$ (\Cref{vckbudparameterzied})\\\tabitem$\OO\left(2^{\tw}\cdot poly(n, 1/ \epsilon,\log\left(\sum_{v\in\VV}\alpha(v)\right))\right)$ time, $(1 - \epsilon)$\\ approximation to maximize value$^\star$ (\Cref{vc-ptas})\\\tabitem $\OO\left(2^{\tw}\cdot poly(n,1/ \epsilon,\log\left(\sum_{v\in\VV}w(v)\right))\right)$ time, $(1 + \epsilon)$\\ approximation to minimize weight$^\ddagger$ (\Cref{fptas-weight})}\\\hline
  
		Minimum Vertex Cover & \makecell[l]{\tabitem\NP hard (\Cref{minimumvck-nph}) \\\tabitem\NPC for trees (\Cref{minimumvck-trees-npc})\\\tabitem$\OO\left(2^{\tw}\cdot n^{\OO(1)}\cdot {\sf min}\{s^2,p^2\}\right)$ (\Cref{minvckparameterzied}) \\\tabitem$\OO\left(2^{\tw}\cdot poly(n, 1/ \epsilon,,\log\left(\sum_{v\in\VV}\alpha(v)\right))\right)$ time, $(1 - \epsilon)$\\ approximation to maximize value$^\star$ (\Cref{vc-ptas})\\\tabitem $\OO\left(2^{\tw}\cdot poly(n,1/ \epsilon,\log\left(\sum_{v\in\VV}w(v)\right))\right)$ time, $(1 + \epsilon)$\\ approximation to minimize weight$^\ddagger$ (\Cref{fptas-weight})}\\\hline
  
		Minimal Vertex Cover & \makecell[l]{\tabitem Strongly \NPC (\Cref{thm:minimalvckp-gen-npc}) \\\tabitem\NPC for trees (\Cref{minimalvck-trees-npc})\\\tabitem No poly-time approx. algorithm neither to\\ maximize value$^\star$ nor to minimize weight$^\dagger$ (\Cref{hardnessofapprox})\\\tabitem $\OO\left(16^{\tw}\cdot n^{\OO(1)}\cdot {\sf min}\{s^2,p^2\}\right)$ (\Cref{fptminimal}) \\\tabitem$\OO\left(16^{\tw}\cdot poly(n, 1/ \epsilon,\log\left(\sum_{v\in\VV}\alpha(v)\right))\right)$ time, $(1 - \epsilon)$\\ approximation to maximize value$^\star$ (\Cref{thm-fptas})\\\tabitem $\OO\left(16^{\tw}\cdot poly(n,1/ \epsilon,\log\left(\sum_{v\in\VV}w(v)\right))\right)$ time, $(1 + \epsilon)$\\ approximation to minimize weight$^\ddagger$ (\Cref{fptas-weight})}\\\hline

         Set Cover & \makecell[l]{\tabitem Strongly \NPC (\Cref{setcover-knapsack-npc})\\\tabitem Poly-time f-approx. to minimize weight$^\dagger$ (\Cref{thm:approxf})\\\tabitem Poly-time $H_d$-approx. to minimize weight$^\dagger$ (\Cref{hgapprosetcover})\\\tabitem Poly-time $\rho$-inapprox. to maximize value$^\star$ (\Cref{hardnessofapprox})} \\ \hline

        $d$-Hitting Set & \makecell[l]{\tabitem Strongly \NPC (\Cref{hitting-knapsack-npc})\\\tabitem Poly-time d-approx. to minimize weight$^\dagger$ (\Cref{hsdfactor})\\\tabitem Poly-time $\rho$-inapprox. to maximize value$^\star$  (\Cref{hardnessofapprox})} \\ \hline \multicolumn{2}{c}{}\\

	\caption{Summary of results. $\tw:$ treewidth of the graph, $s:$ size of knapsack, $p:$ target value of knapsack, $\eps:$ any real number greater than zero, $n:$ number of vertices or size of the universe, $f:$ the maximum number of sets where any element belongs, $d:$ maximum size of any set, $\rho:$ any poly-time computable function. $\star:$ size of knapsack is input. $\dagger:$ bag size is input. $\ddagger:$ target value is input.}
	\label{tab:contribution}
\end{longtable}

Our fixed-parameter pseudo-polynomial time algorithms with respect to treewidth for the variants of vertex cover knapsack combine the idea of pseudo-polynomial time algorithm and the dynamic programming algorithm for vertex cover with respect to treewidth. Then, we use these algorithms in a black-box fashion to obtain \FPT-approximation algorithms.

\longversion{ 
\subsection{Organization}
The rest of the paper is organized as follows. 
We formally define the problems in Section \ref{pd}. We show the classical NP-Completeness results in Section \ref{npcresults}. We design polynomial-time approximation algorithms and \FPT algorithm in Section \ref{approximation} and Section \ref{parameterized}, respectively. We conclude our efforts in Section \ref{conclusion}. A preliminary
version of this work was published before \cite{dey2024vcknapsack} which did not contain most of the proofs.
}

\subsection{Related Work}
\label{related}

The classical knapsack problem admits a fully polynomial time approximation scheme (FPTAS)~\cite{vazirani2001approximation,williamson2011design}. Many extensions and generalizations of the knapsack problem have been studied extensively~\cite{salkin1975knapsack,cacchiani2022knapsack,pisinger2007quadratic}. Since our paper focuses on generalizations of knapsack to some graph theoretic problems and their extension to hypergraphs, we discuss only those knapsack variants directly related to ours.

Yamada et al.~\cite{yamada2002heuristic} proposed heuristics for knapsack when there is a graph on the items, and the solutions need to be an independent set. Many intractability results in special graph classes and heuristic algorithms based on pruning, dynamic programming, etc. have been studied for this independent set knapsack problem~\cite{hifi2006reactive,hifi2007reduction,PferschyS09,BettinelliCM17,HeldCS12,ConiglioFS21,LuizSU21,PferschyS17,GurskiR19,GoebbelsGK22,DBLP:journals/dam/BonomoE19,DBLP:journals/orl/ManninoORC07}. Dey et al.~\cite{dey2024knapsack} studied the knapsack problem with graph-theoretic constraints like - connectedness, paths, and shortest path. 

Note that our \vcknapsack also generalizes the classical weighted vertex cover problem, for which we know a polynomial-time $2$-approximation algorithm which is the best possible approximation factor up to additive $\eps>0$ that one can achieve in polynomial time assuming Unique Games Conjecture~\cite{vazirani2001approximation,williamson2011design}. On the parameterized side, there is a long line of work on designing a fast \FPT algorithm for vertex cover parameterized by the size $k$ of a minimum vertex cover, with the current best being $\OO\left(1.25284^k\cdot n^{\OO(1)}\right)$~\cite{DBLP:conf/stacs/0001N24}. A related problem is maximum minimal vertex cover, where the goal is to compute a minimal vertex cover of maximum cardinality. This problem admits a polynomial kernel in terms of the number of vertices and can be solved in $\OO\left(2^k+n^{\OO(1)}\right)$ time~\cite{fernau2005parameterized}. Later, Peter Damaschke~\cite{damaschke2011parameterized} proved that it is solvable in time $\OO\left(1.62^k\cdot n^{\OO(1)}\right)$. Boria et al.~\cite{boria2015max} showed that there is a polynomial time $n^{-1/2}$ approximation algorithm and inapproximable within the ratio $n^{\epsilon-1/2}$ in polynomial time unless $\Pb=\NP$, where $\epsilon > 0$. 

Various approximation algorithms have been studied for the \setc problem with approximation ratios $f$ where $f$ is the maximum number of sets that any element can belong and $H_d$ where $d$ is the maximum cardinality of any set, and $H_d$ is the $d$-th harmonic number. These approximation factors are tight up to additive $\eps>0$ under standard complexity-theoretic assumptions~\cite{DBLP:journals/jacm/Feige98,DBLP:journals/jcss/KhotR08,williamson2011design,vazirani2001approximation}.



\section{Preliminaries}
\label{pd}

We denote the set $\{1,2,\ldots\}$ of natural numbers with \NB. For any integer \el, we denote the sets $\{1,\ldots,\el\}$ and $\{0,1,\ldots,\el\}$ by $[\el]$ and $[\el]_0$ respectively. We now define our problems formally. Our first problem is \vcknapsack, where we need to find a vertex cover that fits the knapsack and meets a target value. A {\em vertex cover} of a graph is a subset of vertices that includes at least one end-point of every edge. Formally, it is defined as follows.

\begin{definition}[\vcknapsack]\label{def:vckp}
	Given an undirected graph $\GG=(\VV,\EE)$, weight of vertices $(w(u))_{u\in\VV}$, value of vertices $(\alpha(u))_{u\in\VV}$, size $s$ of knapsack, and target value $p$, compute if there exists a vertex cover $\UU\subseteq\VV$ of \GG with weight $w(\UU)=\sum_{u\in\UU} w(u) \le s$ and value $\alpha(\UU)=\sum_{u\in\UU} \alpha(u) \ge p$. We denote an any instance of it by $(\GG,(w(u))_{u\in\VV},(\alpha(u))_{u\in\VV},s,p)$.
\end{definition}

The \vcknapsackbudget, \minimumvcknapsack, \minimalvcknapsack problems are the same as \Cref{def:vckp} except we require the solution \UU to be respectively a vertex cover of size at most $k$ for an input integer $k$, a minimum cardinality vertex cover, a minimal vertex cover.

The treewidth of a graph quantifies the tree likeness of a graph~\cite{cygan2015parameterized}. Informally speaking, a tree decomposition of a graph is a tree where every node of the tree corresponds to some subsets of vertices, called bags, and the tree should satisfy three properties: (i) every vertex of the graph should belong to some bag, (ii) both the endpoints of every edge should belong to some bag, and (iii) the set of nodes of the tree containing any vertex should be connected..

The treewidth of a graph is defined as follows~\cite{cygan2015parameterized}.
\begin{definition}[\twd] 
Let $G = (V_G,E_G)$ be a graph.  A \textit{tree-decomposition} of a graph $G$ is a pair 
$(\mathbb{T} = ((V_{\mathbb{T}},E_{\mathbb{T}}),\mathcal{ X}=\{X_{t}\}_{t\in V_{\mathbb{ T}}})$,  where 
${\mathbb{ T}}$ is a tree where every node $t\in V_{\mathbb{ T}}$ 
is assigned a subset $X_t\subseteq V_G$, called a bag,  such that the following conditions hold. 
\begin{itemize}
\item $\bigcup_{t\in V_\mathbb{T}}{X_t}=V_G$,
\item for every edge $\{x,y\}\in E_G$ there is a $t\in V_\mathbb{T}$ such that  $x,y\in X_{t}$, and 
\item for any $v\in V_G$ the subgraph of $\mathbb{T}$ induced by the set  $\{t\mid v\in X_{t}\}$ is connected.
\end{itemize}
\end{definition}
The \textit{width} of a tree decomposition is $\max_{t\in V_\mathbb{T}} |X_t| -1$. The \textit{treewidth} of $G$ 
is the  minimum width over all tree decompositions of $G$ and is denoted by $\tw(G)$. 
 
A tree decomposition $(\mathbb{T},\mathcal{ X})$ is called a \textit{nice edge tree decomposition} if $\mathbb{T}$ is a tree rooted at some node $r$ where $ X_{r}=\emptyset$, each node of $\mathbb{T}$ has at most two children, and each node is of one of the following kinds:
\begin{itemize}
\item {\bf Introduce node}: a node $t$ that has only one child $t'$ where $X_{t}\supset X_{t'}$ and  $|X_{t}|=|X_{t'}|+1$.
\item {\bf  Forget vertex node}: a node $t$ that has only one child $t'$  where $X_{t}\subset X_{t'}$ and  $|X_{t}|=|X_{t'}|-1$.
\item {\bf Join node}:  a node  $t$ with two children $t_{1}$ and $t_{2}$ such that $X_{t}=X_{t_{1}}=X_{t_{2}}$.
\item {\bf Leaf node}: a node $t$ that is a leaf of $\mathbb T$, and $X_{t}=\emptyset$. 
\end{itemize}
We additionally require that every edge is introduced exactly once. 
One can  show that  a tree decomposition of width $t$ can be transformed into 
a nice tree decomposition of the same width $t$ and  with 
 $\mathcal{O}(t |V_G|)$ nodes \cite{cygan2015parameterized}. For a node $t \in \mathbb{T}$, let $\mathbb{T}_t$ be the subtree of $\mathbb{T}$ rooted at $t$, and $V(\mathbb{T}_t)$ denote the vertex set in that subtree. Then $G_t$ is the subgraph of $G$ where the vertex set is  $\bigcup_{t' \in V(\mathbb{T}_t)} X_{t'}$ and the edge set is the union of the set of edges introduced in each $t', t' \in V(\mathbb{T}_t)$. We denote by $V(G_t)$ the set of vertices in that subgraph, and by $E(G_E)$ the set of edges of the subgraph.: a node $t$ that is a leaf of $\mathbb T$, and $X_{t}=\emptyset$. 

 In this paper, we sometimes fix a vertex $v\in V_G$ and include it in every bag of a nice edge tree decomposition $(\mathbb{T},\mathcal{X})$ of $G$, with the effect of the root bag and each leaf bag containing $v$. For the sake of brevity, we also call such a modified tree decomposition a nice tree decomposition. Given the tree $\mathbb{T}$ rooted at the node $r$, for any nodes $t_1,t_2 \in V_\mathbb{T}$, the distance between the two nodes in $\mathbb{T}$ is denoted by $\sf{dist}_\mathbb{T}(t_1,t_2)$.

Extending the notion of vertex cover to hypergraphs, we define the \setck problem where we need to compute a set cover that fits the knapsack and achieves a maximum value. Formally, we define it as follows.

\begin{definition}[\setck]
    Given a collection $\FF=\{S_1,\ldots,S_m\}$ of subsets of a universe $[n]$ with weights $(w_j)_{j\in[m]}$ and values $(\alpha_j)_{j\in[m]}$ of the sets, size $s$ of knapsack, and target value $p$, compute if there exists a set cover $\JJ\subseteq[m]$ of weight $w(\JJ)=\sum_{j\in\JJ}w_j\le s$ and value $\alpha(\JJ)=\sum_{j\in\JJ}\alpha_j\ge p$. We denote any instance of it by $([n],\FF,(w_j)_{j\in[m]},(\alpha_j)_{j\in[m]},s,p)$.
\end{definition}

We also define \hsk, where we need to compute a hitting set that fits the knapsack and achieves at least the target value; here, items have weights and values.

\begin{definition}[\hsk]
    Given a collection $\FF=\{S_1,\ldots,S_m\}$ of subsets of a universe $[n]$ of size at most $d$ with weights $(w_i)_{i\in[n]}$ and values $(\alpha_i)_{i\in[n]}$ of the items, size $s$ of knapsack, and target value $p$, compute if there exists a hitting set $\II\subseteq[n]$ of weight $w(\II)=\sum_{i\in\II}w_i\le s$ and value $\alpha(\II)=\sum_{i\in\II}\alpha_i\ge p$. We denote any instance of it by $([n],\SS,(w_i)_{i\in[n]},(\alpha_i)_{i\in[n]},s,p)$.
\end{definition}

If not mentioned otherwise, we use $n$ to denote the number of vertices for problems involving graphs and the size of the universe for problems involving a set family; $m$ to indicate the number of edges for problems involving graphs and the number of sets in the family of sets for problems involving a set family; \tw to denote the treewidth of the graph; $s$ to represent the size of knapsack, and $p$ to denote the target value of solution.
\section{Results: Classical NP Completeness}
\label{npcresults}

In this section, we present our \NP-completeness results. Our first results show that \vcknapsack is strongly \NPC, that is, it is \NPC even if the weight and value of every vertex are encoded in unary. We reduce from the classical \vc problem, where the goal is to find a vertex cover of cardinality at most some input integer $k$. \vc is known to be \NPC even for $3$ regular graphs~\cite[folklore]{DBLP:journals/dm/FleischnerSS10}. To reduce a \vc instance to a \vcknapsack instance, we define the weight and value of every vertex to be $1$, and the size and target value to be $k$. 
Some of our observations and theorems are intuitive. We omit the proofs of a few of our results. They are marked $(\star)$.

\begin{observation}\label{thm:vckp-gen-npc}
	\vcknapsack is strongly \NPC.
\end{observation}
 \begin{proof}
     Clearly, \vcknapsack is in \NP. We reduce \vc to \vcknapsack to prove NP-hardness. Let $(\GG(\VV=\{v_i: i\in[n]\},\EE),k)$ be an arbitrary instance of \vc. We construct the following instance $(\GG^\pr(\VV^\pr=\{u_i: i\in[n]\},\EE^\pr),(w(u))_{u\in\VV^\pr},(\alpha(u))_{u\in\VV^\pr},s,p)$ of \vcknapsack.
	\begin{align*}
		&\VV^\pr = \{u_i : v_i\in \VV, \forall i\in[n]\}\\
		&\EE^\pr = \{\{u_i,u_j\}: \{v_i, v_j\} \in\EE, i\neq j, \forall i, j\in[n]\}\\
		&w(u_i) = 1, \alpha(u_i)=1 \qquad \forall i\in[n]\\
		&s =p = k
	\end{align*}
	The \vcknapsack problem has a solution iff \vc has a solution.\\
	
	\noindent
	Let $(\GG^\pr,(w(u))_{u\in\VV^\pr},(\alpha(u))_{u\in\VV^\pr},s,p)$ of \vcknapsack such that $\WW^\pr$ be the resulting subset of $\VV^\pr$ with (i) $\WW^\pr$ is a \vc, (ii) $\sum_{u\in \WW^\pr} w(u) = k$, (iii) $\sum_{u\in \WW^\pr} \alpha(u) = k$\\
	\noindent
	This means that the set $\WW^\pr$ is a \vc which gives the maximum profit $k$ for the bag capacity of size $k$. In other words, $\WW^\pr$ is a \vc of size $k$. Since $\WW' =\{u_i: v_i \in \WW, \forall i \in[n]\}$, $\WW$ is a \vc of size $k$. Therefore, the \vc instance is an \yes instance.\\
	
	\noindent
	Conversely, let us assume that \vc instance $(\GG,k)$ is an \yes instance.\\ 
	Then there exists a subset $\WW\subseteq \VV$  of size $k$ such that it outputs a \vc. \\	
	\noindent
	Consider the set $\WW^\pr=\{u_i: v_i \in \WW, \forall i \in[n]\}$.
	Since each vertex of $\WW^\pr$ is involved with weight 1 and produces profit amount 1, $\WW^\pr$ is a \vc of max bag size and total profit $k$. 
	\noindent
	Therefore, the \vcknapsack instance is an \yes instance.
 \end{proof}

The same reduction in \Cref{thm:vckp-gen-npc} also shows that \vcknapsackbudget is strongly \NPC.

\begin{corollary}\label{thm:vckb-gen-npc}
	\vcknapsackbudget is strongly \NPC.
\end{corollary}

In the \maxminimalvc problem, the goal is to compute if there exists a minimal vertex cover of cardinality at least some input integer. A vertex cover of a graph is called minimal if no proper subset of it is a vertex cover. \maxminimalvc is known to be \NPC \cite{DBLP:conf/aaim/HanIKL07,DBLP:journals/dam/BoriaCP15}. We show that the same reduction as in the proof of \Cref{thm:vckp-gen-npc} except starting from an instance of \maxminimalvc instead of \vc, proves that \minimalvcknapsack is strongly \NPC.

\begin{observation}\label{thm:minimalvckp-gen-npc}
	\minimalvcknapsack is strongly \NPC.
\end{observation}
\begin{proof}
	Clearly, \minimalvcknapsack $\in$ \NP. Since the decision version of \maxminimalvc is \NPC,  we reduce \maxminimalvc to \minimalvcknapsack to prove NP-completeness. Let $(\GG(\VV=\{v_i: i\in[n]\},\EE),k)$ be an arbitrary instance of \maxminimalvc of size $k$. We construct the following instance $(\GG^\pr(\VV^\pr=\{u_i: i\in[n]\},\EE^\pr),(w(u))_{u\in\VV^\pr},(\alpha(u))_{u\in\VV^\pr},s,p)$ of \minimalvcknapsack.
	\begin{align*}
		&\VV^\pr = \{u_i : v_i\in \VV, \forall i\in[n]\}\\
		&\EE^\pr = \{\{u_i,u_j\}: \{v_i, v_j\} \in\EE, i\neq j, \forall i, j\in[n]\}\\
		&w(u_i) = 1, w(u_j)=0 \quad \text{for~some~} i,j \in [1,n] \\
		&\alpha(u_i)=1 \quad \forall i \in [1,n] \\
		&s = p =k
	\end{align*}
	The reduction of \maxminimalvc to \minimalvcknapsack works in polynomial time. The \minimalvcknapsack problem has a solution iff \maxminimalvc has a solution.\\
	\noindent
	Let $(\GG^\pr(\VV^\pr,\EE^\pr),(w(u))_{u\in\VV^\pr},(\alpha(u))_{u\in\VV^\pr},s,p)$ of \minimalvcknapsack such that $\WW$ be the resulting subset of $\VV'$ with (i) $\WW$ is a minimal \vc, (ii) $\sum_{u\in \WW} w(u) \leq s = k$, (iii) $\sum_{u\in \WW} \alpha(u) \geq p = k$.\\
	From the above condition (i), as $\WW$ is minimal \vc we can not remove any vertex from $\WW$ which gives another \vc. Now, since we rewrite the second and third conditions as $|\WW| \leq k$ and $|\WW| \geq k$. So, $|\WW|$ must be $k$. Now, let $\II=\{v_i:u_i \in \WW,\forall i\in[n]\}$. Since $\WW$ is the Minimal \vc of size $k$, then $\II(\subseteq \VV)$ must be \maxminimalvc set of size at least $k$.Therefore, the \maxminimalvc instance is an \yes instance.\\
	
	\noindent
	Conversely, let us assume that \maxminimalvc instance $(\GG,k)$ is an \yes instance.\\ 
	Then there exists a subset $\II\subseteq \VV$  of size $ \geq k$ such that it outputs an \maxminimalvc. Let $\WW=\{u_i:v_i \in \II,\forall i\in[n]\}$. Since $\II$ is an \maxminimalvc of size $ \geq k$, then $\WW$ must be a Minimal \vc of size $k$. So, $\sum_{u\in \WW} w(u) \leq k =s$ and $\sum_{u\in \WW} \alpha(u) \geq k = p$. Now, we want to prove that $\WW$ is a minimal \vc.\\
	\noindent
	Therefore, the \minimalvcknapsack instance is an \yes instance.
\end{proof}

We show similar results for \minimumvcknapsack except that it does not belong to \NP unless the polynomial hierarchy collapses.

\begin{observation}\label{minimumvck-nph}
    \minimumvcknapsack is strongly \NPH.
\end{observation}

\begin{proof}
Clearly, \minimumvcknapsack does not belong to the class \NP. To prove hardness, we reduce \minimumvcknapsack from \vc. Let $(\GG(\VV=\{v_i: i\in[n]\},\EE),k)$ be an arbitrary instance of \vc, where we ask if there is a vertex cover of size at most $k$. We construct the following instance $(\GG^\pr(\VV^\pr=\{u_i: i\in[n]\},\EE^\pr),(w(u))_{u\in\VV^\pr},(\alpha(u))_{u\in\VV^\pr},s,p)$ of \vcknapsack.
	\begin{align*}
		&\VV^\pr = \{u_i : v_i\in \VV, \forall i\in[n]\}\\
		&\EE^\pr = \{\{u_i,u_j\}: \{v_i, v_j\} \in\EE, i\neq j, \forall i, j\in[n]\}\\
		&w(u_i) = 1, \alpha(u_i)=0 \qquad \forall i\in[n]\\
		&s = k\\
            &p = 0
	\end{align*}
	The \minimumvcknapsack problem has a solution iff \vc has a solution.\\ Let us assume $\WW \subseteq \VV$ be the solution to \vc.
Let $(\GG^\pr,(w(u))_{u\in\VV^\pr},(\alpha(u))_{u\in\VV^\pr},s,p)$ of \minimumvcknapsack such that $\WW^\pr$ be the resulting subset of $\VV^\pr$ with (i) $\WW^\pr$ is a minimum \vc, (ii) $\sum_{u\in \WW^\pr} w(u) = k$, (iii) $\sum_{u\in \WW^\pr} \alpha(u) = 0$\\
	\noindent
	This means that the set $\WW^\pr$ is a \vc which gives the maximum value 0 for the bag capacity of size $k$. In other words, $\WW^\pr$ is a \vc of size at most $k$. Since $\WW' =\{u_i: v_i \in \WW, \forall i \in[n]\}$, $\WW$ is a \vc of size $k$. Therefore, the \vc instance is an \yes instance.\\
	
	\noindent
	Conversely, let us assume that \vc instance $(\GG,k)$ is an \yes instance.\\ 
	Then there exists a subset $\WW\subseteq \VV$  of size at most $k$ such that it outputs a \vc. This means there exists a minimum vertex cover of size at most $k$.\\	
	\noindent
	Consider the set $\WW^\pr=\{u_i: v_i \in \WW, \forall i \in[n]\}$.
	Since each vertex of $\WW^\pr$ is involved with weight 1 and produces value amount 0, $\WW^\pr$ is a \vc of max bag size i.e. $k$ and total profit $0$. 
	\noindent
	Therefore, the \minimumvcknapsack instance is also an \yes instance.
 \end{proof}

We show next that \vcknapsack is \NPC even if the underlying graph is a tree by reducing it from the classical \kp --- simply add the knapsack items as leaves of a star graph. However, it turns out that they are not strongly \NPC for trees. We will see in \Cref{parameterized} that they admit pseudo-polynomial time algorithms for trees.

\begin{observation}($\star$)\label{vck-trees-npc}
\vcknapsack is \NPC for star graphs.
\end{observation}

By setting $k$ to be the number of leaves, the reduction in the proof of \Cref{vck-trees-npc} also shows \NP-completeness for \vcknapsackbudget.

\begin{observation}($\star$)\label{k-vck-trees-npc}
\vcknapsackbudget is \NPC for star graphs.
\end{observation}

Note that the reduction from \kp to \vcknapsack for star graphs does not work for \minimalvcknapsack and \minimumvcknapsack. Indeed, for star graphs, both the problems admit polynomial-time algorithms. Nevertheless, we are able to show that both the problems are (not strongly) \NPC for trees.

\begin{theorem}\label{minimalvck-trees-npc}
\minimalvcknapsack is \NPC for trees.
\end{theorem}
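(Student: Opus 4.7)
The plan is to establish membership in \NP by guessing a candidate subset $U$ and verifying in polynomial time that $U$ is a vertex cover, that no vertex of $U$ can be removed while preserving coverage (minimality), and that the knapsack constraints $w(U)\leq s$ and $\alpha(U)\geq d$ hold. For \NP-hardness on trees, the reduction from \maxminimalvc used in \Cref{thm:minimalvckp-gen-npc} is not available because \vc and \maxminimalvc are polynomial-time solvable on trees. Instead, I plan to reduce from the classical \kp problem, which is \NPC; this will yield only weak hardness, which is consistent with the pseudo-polynomial upper bounds one expects in the tree regime.

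Given a \kp instance with items $[n]$, weights $w_i$, values $\alpha_i$, capacity $s$, and target $d\geq 1$ (the case $d\leq 0$ being trivially feasible), I build a tree $\TT$ as follows. Introduce a root vertex $r$, and for each item $i\in[n]$ a path $a_i - b_i - c_i$ on three fresh vertices, attached to $r$ by the single edge $\{r,b_i\}$. Define $w(a_i)=w_i$, $\alpha(a_i)=\alpha_i$, and $w(v)=\alpha(v)=0$ for every other vertex $v$. Keep the knapsack parameters $s$ and $d$ unchanged. The resulting graph $\TT$ has $3n+1$ vertices and $3n$ edges, is clearly a tree (a spider whose legs are $P_3$'s glued at their middle vertex to $r$), and is constructed in polynomial time.

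The heart of the argument is the following structural claim: every minimal vertex cover $U$ of $\TT$ either equals $\{b_1,\dots,b_n\}$ (if $r\notin U$) or has the form $U=\{r\}\cup\bigcup_{i\in S}\{a_i,c_i\}\cup\bigcup_{i\notin S}\{b_i\}$ for some nonempty $S\subseteq[n]$ (if $r\in U$). Within each $P_3$ gadget, the only two minimal local choices are $\{b_i\}$ and $\{a_i,c_i\}$, because $a_i$ and $c_i$ are leaves whose unique incident edge is already covered by $b_i$ whenever $b_i\in U$; and for $r$ to be non-redundant there must exist some $j$ with $b_j\notin U$, forcing $S\neq\emptyset$. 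In the first case $\alpha(U)=0<d$, so this cover is infeasible; in the second case $w(U)=\sum_{i\in S}w_i$ and $\alpha(U)=\sum_{i\in S}\alpha_i$, so the knapsack constraints on $U$ translate exactly into the \kp constraints on $S$. Both directions of the equivalence then follow immediately.

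The main obstacle I expect is making the minimality bookkeeping watertight, particularly around the root. In the forward direction I must argue that every minimal cover containing $r$ must omit at least one $b_j$, and in the reverse direction that the cover obtained from any nonempty $S$ is truly minimal, i.e.\ that each $a_i$ and $c_i$ (for $i\in S$) is the sole cover of its unique private edge, each $b_i$ (for $i\notin S$) is the sole cover of $\{a_i,b_i\}$, and $r$ is the sole cover of $\{r,b_j\}$ for any $j\in S$. The assumption $d\geq 1$ together with nonnegativity of values is precisely what rules out the degenerate minimal cover $\{b_1,\dots,b_n\}$ and makes the correspondence with \kp clean.
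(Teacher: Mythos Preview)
Your proposal is correct. Both your argument and the paper's reduce from \kp, so the high-level approach is the same; the difference lies in the tree gadget. The paper builds a complete binary tree with the $n$ items placed at the penultimate level and zero weight/value everywhere else, then argues (somewhat informally) that any selection of penultimate vertices can be extended to a minimal vertex cover of the whole tree while the remaining vertices contribute nothing to weight or value. Your spider construction with $r$ adjacent to $b_1,\dots,b_n$ and each $b_i$ carrying two pendant leaves $a_i,c_i$ is strictly simpler: it has constant depth, and you obtain a complete explicit description of \emph{all} minimal vertex covers, which makes both directions of the equivalence immediate and the minimality bookkeeping airtight. The paper's construction requires reasoning about minimality across $\Theta(\log n)$ internal levels, which it handles by a recursive deletion argument that is correct but less transparent. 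Your assumption $d\ge 1$ to exclude the all-$b_i$ cover is exactly the right device; the paper implicitly relies on the same nontriviality but does not isolate it as cleanly.
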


\begin{proof}
Clearly, \minimalvcknapsack for tree $\in$ NP. We reduce \minimalvcknapsack from \kp to prove hardness. Given, an arbitrary instance $(\II,(\theta_i)_{i\in\II}, (p_i)_{i\in\II}, b,q)$ of \kp,  we create an instance of \minimalvcknapsack for trees by forming a complete binary tree \TT = (\VV, \EE) that has $n$ nodes in its penultimate level and denote an instance as $(\TT,(w(u))_{u\in\TT},(\alpha(u))_{u\in\TT},s,p)$. The construction steps are as follows:

\begin{enumerate}
    \item We construct a complete binary tree \TT such that the penultimate level of this tree has $n$ nodes representing the $n$ items.
    \item Without loss of generality we assume, that $n=|\II|=2^k$ because suppose there exists $n$ such that $2^{k-1} < n < 2^{k} $, add  $2^{k} - n$ items to \II each with size = 0 and profit = 0, where $k \ge 0$. 
    \item Let $\VV^\pr \subseteq \TT[\VV]$ denote the vertices belonging to the penultimate level of the tree \TT.
    \item Set $w(u_i)_{u_i \in \VV^\pr} = \theta_i, \alpha(u_i)_{u_i \in \VV^\pr} = p_i, 
			w(u_i)_{u_i \in \VV - \VV^\pr}=0, \alpha(u_i)_{u_i \in \VV - \VV^\pr}=0 \\
			s=b, p=q$
\end{enumerate}

This construction takes polynomial time.
\par We now show that \kp $\le_p$ \minimalvcknapsack.

The \minimalvcknapsack problem has a solution if and only if \kp has a solution.

Let us consider $(\TT,(w(u))_{u\in\VV},(\alpha(u))_{u\in\VV},s,p)$ be an \yes instance of \minimalvcknapsack such that $\WW$ be the resulting subset of $\VV$ with $\WW$ is a minimal \vc, $\sum_{u\in \WW} w(u) = b$, and $\sum_{u\in \WW} \alpha(u) = p$. This means that the set $\WW$ is a minimal \vc, which gives the maximum profit $p$ for the bag capacity of size $b$. In other words, $\WW \cap \VV^\pr = \WW^\pr \ne \phi$ i.e. \WW must contain at least a subset of $\VV^\pr$ in order to attain the required profit $p$ because \WW - \VV' has vertices whose weight = 0, and profit = 0. As per our construction, \VV' contains the vertices denoting the $n$ items of \II in \kp.
	\noindent
	Therefore, the \kp instance is an \yes instance.\\

Conversely, let us assume that $(\II,(\theta_i)_{i\in\II}, (p_i)_{i\in\II}, b,q)$ be an \yes instance of \kp such that $\II$ be the resulting subset of $\II$ with $\sum_{i\in \II} \theta_i \leq b$, and  $\sum_{i\in \II} p_i \geq q$.	The minimal vertex cover knapsack must pick vertices from $\VV^\pr$ otherwise it cannot meet the desired value $p$ as all vertices of $\VV - \VV^\pr$ have weight and value equal to 0.

 We obtain a minimal \vc for \TT by recursively constructing the solution set, \WW as:
\begin{enumerate}
     \item if $u \in \TT[\II]$, include $u$ in \WW and delete $\NN[u]$ from \TT
     \item if $u \notin \TT[\II]$, include $\NN(u$ in \WW delete $\NN[u]$ from \TT 
     \item solve for minimal vertex cover on the induced sub-graph where weight and profit associated with each vertex is 0.   
 \end{enumerate}

\noindent
	Therefore, the \minimalvcknapsack for tree is an \yes instance.
\end{proof}

\begin{observation}\label{minimumvck-trees-npc}
\minimumvcknapsack is \NPC for trees.
\end{observation}
\begin{proof}
Clearly, \minimumvcknapsack for trees belongs to \NP. We reduce \minimumvcknapsack from \kp to prove hardness. Given, an arbitrary instance $(\II,(\theta_i)_{i\in\II}, (p_i)_{i\in\II}, b,q)$ of \kp,  we create an instance of \minimumvcknapsack for trees and denote it as $(\TT,(w(u))_{u\in\TT},(\alpha(u))_{u\in\TT},s,p)$. The construction steps are as follows: 
\begin{enumerate}
    \item Create $m, (m>1)$ vertices each with weight = 0 and value = 0.
    \item Create another vertex with weight = 0 and value = 0. Attach each of the vertices created in Step 1 to this vertex.
    \item Create $u_1, u_2, \dots,u_{|\II|}$ vertices. Set \(w(u_i) = \theta_i, \forall i \in \II\) and \(\alpha(u_i) = p_i, , \forall i \in \II\). Attach each of these vertices to the vertex created in Step 2.
    \item Create $v_1, v_2, \dots,v_{|\II|}$ vertices. Set \(w(v_i) = 0, \forall i \in \II\) and \(\alpha(v_i) = 0 , \forall i \in \II\). Connect \(\{u_i,v_i\}, \forall i\in [|\II|]\) with an edge $e_i$.
    \item Set $s=b$ and $p=q$
\end{enumerate}

We claim that the constructed instance is an \yes instance for \minimumvcknapsack for trees iff \kp is an instance.

Let us assume that \minimalvcknapsack for trees is an \yes instance and let $\WW^\pr \subseteq \TT$ be its solution. This means that the vertex created in Step 2 is a forced vertex (otherwise the solution $\WW^\pr$ is not minimum, a contradiction). Now to cover each of the edges \(\{u_i,v_i\}, \forall i\in [\II]\), we have a choice of two vertices i.e. either $u_i$ or $v_i$. This selection purely translates to the problem of solving the knapsack constraint on the vertex set $\{u_i\}_{i \in [|\II|]}$ otherwise we cannot meet the desired total value $p$ within the knapsack size $s$. Thus the \kp problem is also an \yes instance.

Conversely, let \kp be ab \yes instance and let 
$\WW^\pr \subseteq \II$ be its solution. This means we have a solution to \kp problem where the total value is at least $p$ and total value is at most $b$. Therefore, in the \minimumvcknapsack instance we choose the vertices $u_i \in \TT$ corresponding to the item $\theta_i \in \WW$ to cover edge $e_i$. If there exists some edge $e_i$ for which its corresponding $u_i$ is not in \WW, we choose $v_i$ to cover such edge(s). Interestingly, both the weight and value of $v_i \forall i \in [|\II|]$ is 0. Hence, there exists a minimum vertex cover of total weight at most $s$ and total value at least $p$ in the \minimumvcknapsack instance. Therefore, the \minimumvcknapsack instance is also an \yes instance.
\end{proof}

Note that, since the size of a minimum vertex cover in a tree can be computed in polynomial time thanks to K\"{o}nig's Theorem~\cite{west2001introduction}, \minimumvcknapsack belongs to \NP.

We show similar results for \setck and \hsk also by reducing from respectively unweighted set cover and unweighted $d$-hitting set, both of which are known to be \NPC~\cite{DBLP:books/fm/GareyJ79}.

\begin{observation}($\star$)\label{setcover-knapsack-npc}
   \setck is strongly \NPC. 
\end{observation}

\begin{observation}($\star$)\label{hitting-knapsack-npc}
   \hsk is strongly \NPC. 
\end{observation}
\section{Results: Polynomial Time Approximation Algorithms}
\label{approximation}

In this section, we focus on the polynomial-time approximability of our problems. For all the problems in this paper, we study two natural optimization versions: (i) minimizing the weight of the solution given a target value as input and (ii) maximizing the value of the solution given knapsack size as input. We first consider minimizing the weight of the solution.

A natural integer linear programming formulation of \vcknapsack is the following.

minimize $\sum_{u \in \VV} w(u)x_u$ 
\begin{align*}
    \text{Subject to:}\\
       &x_u + x_v \geq 1,\forall (u,v) \in \EE \\\numberthis \label{ILP-sc}
        &\sum_{u\in \VV} \alpha(u)x_u \geq p \\
    &x_u \in \{0,1\}, \forall u \in \VV\numberthis \label{ILP-vc}
\end{align*}

We replace the constraints $x_u \in \{0,1\}$, with $x_u \geq 0$, $\forall u\in \VV$ to obtain linear programming (abbreviated as LP) relaxation of the integer linear program (abbreviated as ILP).
\begin{observation}\label{integrality-gap}
    The relaxed LP of the ILP \ref{ILP-vc} has an unbounded integrality gap. To see this, consider an edgeless graph on two vertices $v_1$ and $v_2$. Let $w(v_1) = 0$, $w(v_2) = 1$, $\alpha(v_1) = p-1$ and $\alpha(v_2) = p$. The optimal solution to ILP sets $x_{v_1} = 0, x_{v_2} = 1$, for a total weight of 1. However, the optimal solution to the relaxed LP sets $x_{v_1} = 1, x_{v_2} = 1/p$ and has a total weight of $1/p$. Thus, in this case, the integrality gap is at least $\frac{1}{1/p} = p$. 
\end{observation}

To tackle \Cref{integrality-gap}, we strengthen the inequality $\sum_{u\in \VV} \alpha(u)x_u \geq p$. This allows us to obtain an $f$ approximation algorithm even for the more general \setck problem that we present now. In particular, in addition to having a constraint for every element of the universe, we have a constraint for every $\AA \subseteq \FF$ of sets such that $\alpha(A) = \sum_{i \in \AA} \alpha(i) < p$ where $p$ is the target value given as input. We define the residual value $p_\AA = p - \alpha(\AA)$. Given the set \AA, we simplify the problem on the sets $\FF-\AA$, where the target value is now $p_\AA$. We also reduce the value of each set $S_i \in \FF-\AA$ to be the minimum of its own value and $p_\AA$, i.e., let $\alpha^\AA(i)$ = min($\alpha(i)$, $p_\AA$). We can now give the following Integer linear programming formulation of the problem:


minimize $\sum_{i\in [m]} w(i)x_i$ 
\begin{align*}
    \text{Subject to:}\\
       &\sum_{i:e_j \in \SS_i}x_i \geq 1, \forall e_j \in \UU \\
        &\sum_{i\in \FF-\AA} \alpha^\AA(i)x_i \geq p_\AA, \forall \AA \subseteq \FF \\
    &x_i \in \{0,1\}, \forall i \in [m]
\end{align*}
We replace the constraints $x_i \in \{0,1\}$, with $x_i \geq 0$ to obtain the LP relaxation of the ILP. The dual of the LP relaxation is :

maximize $\sum_{\AA: \AA \subseteq \FF} p_\AA y_\AA$ + $\sum_{j\in[n]}y_j$
\begin{align*}
    \text{Subject to:}\\
       &\sum_{j:e_j \in \SS_i}y_j \leq w(i), \forall \SS_i \in \FF \\
        &\sum_{\AA\subseteq\FF: i \not\in \AA} \alpha^\AA(i)y_\AA \leq w(i), \forall i \in \FF \\
    &y_\AA \geq 0, \forall \AA \subset \FF
\end{align*}

In our primal-dual algorithm, we begin with dual feasible solution $y=0$ and partial solution $\AA=\emptyset$. We pick one set in every iteration until the value of the set \AA of sets becomes at least the target value $p$. We increase the dual variable $y_\AA$ in every iteration until the dual constraint for any set $i\in \FF-\AA$ becomes tight. We then pick that set in our solution and continue. After this loop terminates, the value of the set \AA of sets is at least the target value $p$. At that point, if \AA is a set cover, then we output \AA. Otherwise, till there exists an element $e_j$ of the universe that is not covered by \AA, we increase the dual variable $y_j$ until the dual constraint for some set \el with $e_j\in S_l$ becomes tight. We then include $S_\el$ in \AA and continue. We present the pseudocode of our algorithm in \Cref{pdsc}.

\begin{algorithm}
\caption{Primal-dual $f$-approximation algorithm for \setck}
\begin{algorithmic}[1]
\longversion{\State $y \gets 0$
\State $\AA \gets \emptyset$}
\shortversion{\State $y \gets 0, \AA \gets \emptyset$}
\While{$\alpha(\AA) < p$}
    \State Increase $ y_\AA$ until for some $i \in \FF - \AA, \sum_{\BB\subseteq\FF: i \not\in \BB} \alpha^\BB(i)y_\BB = w(i)$
    \State $\AA \gets \AA \cup \{i\}$
\EndWhile
\longversion{\State $\XX \gets \AA$
\State $\AA^\pr \gets \AA$}
\shortversion{\State $\XX \gets \AA, \AA^\pr \gets \AA$}
\While{$\exists e_j \not \in \bigcup_{i \in \AA^\pr} S_i$}
    \State Increase $y_j$ until there is some $t$ with $e_j \in S_t$ such that $\sum_{j:e_j \in S_t}y_j = w(t) $
    \State $\AA^\pr \gets \AA \cup \{t\}$
\EndWhile
\State \textbf{return} $\AA^\pr$
\end{algorithmic}
\label{pdsc}
\end{algorithm}

\begin{theorem}\label{thm:approxf}
\Cref{pdsc} is an $f$-approximation algorithm for the \setck problem for minimizing the weight of the solution, where $f$ is the maximum number of sets in the family where any element belongs.
\end{theorem}

\begin{proof}

Let \ALG be the weight of the set cover $\AA^\pr$ output by \Cref{pdsc}. Then 
\begin{align*}
    \ALG &= \sum_{i\in \AA^\pr} w(i)x_i \\
        &= \sum_{i\in \XX} w(i)x_i + \sum_{i\in \mathcal{A}^\prime - \mathcal{X}} w(i)x_i 
\end{align*}

Let \OPT be the optimal weight of the \setck instance, set $i$ picked in the $i$-th iteration of the first while loop (which we can assume without loss of generality by renaming the sets), and $l$ the set selected by the algorithm at the last iteration of the first while loop. Since the first while loop terminates when $\alpha(\AA) \geq p$, we know that $\alpha(\XX) \geq p$; since set $l$ was added to $\XX$, it must be the case that before $l$ was added, the total value of \AA was less than $p$, so that $\alpha(\XX - \{l\}) < p$. For $i\in[l]$, we define $\AA_i=[i]$ as the set of sets picked in the first $i$ iterations of the first while loop and $\CC=\{\AA_i: i\in[l]\}$. We observe that a dual variable $y_\BB$ is non-zero only if $\BB\in\CC$. Since we pick only tight sets, we have
\begin{equation*}
   \sum_{i \in \XX} w(i) = \sum_{i \in \XX} \sum_{\BB \subseteq \FF: i \notin \BB} \alpha^\BB(i) y_\BB = \sum_{i \in \XX} \sum_{\BB \in \CC: i \notin \BB} \alpha^\BB(i) y_\BB. 
\end{equation*}

Reversing the double sum, we have
\begin{equation*}
\sum_{i \in \XX} \sum_{\BB \in \CC: i \notin \BB} \alpha^\BB(i) y_\BB = \sum_{\BB \in \CC} y_\BB \sum_{i \in \XX- \BB} \alpha^\BB(i).
\end{equation*}

Note that in any iteration of the algorithm except the last one, adding the next set $i$ to the current sets in $\AA$ did not cause the value of the knapsack to become at least $p$; that is, $\alpha(i) < p - \alpha(\AA) = p_\AA$ at that point in the algorithm. Thus, for all sets $i \in \AA$ except $l$, $\alpha^\AA(i) = \min(\alpha(i), p_\AA) = \alpha(i)$, for the point in the algorithm at which $\AA$ was the current set of sets. Thus, we can rewrite
\begin{equation*}
\sum_{i \in \XX-\AA}  \alpha^\AA(i) =  \alpha^\AA(l) + \sum_{i \in \XX - \AA : i \neq l} \alpha^\AA(i) = \alpha^\AA(l) + \alpha(\XX-\{l\}) - \alpha(\AA).
\end{equation*}

Note that $\alpha^\AA(l)  \leq p_\AA$ by definition, and as argued at the beginning of the proof $\alpha(\XX-\{l\}) < p$ so that $\alpha(\XX-\{l\})  - \alpha(\AA) < p - \alpha(\AA)) = p_\AA$; thus, we have that
\begin{equation*}
\alpha^\AA(l) + \alpha(\XX-\{l\}) - \alpha(\AA) < 2p_\AA
\end{equation*}
which is the same as saying
\[\sum_{i \in \XX- \BB} \alpha^\BB(i)< 2 p_\BB \text{ for every }\BB\in\CC.\]
Therefore,
\begin{equation*}
\sum_{i \in \XX} w(i) = \sum_{\BB \in \CC} y_\BB \sum_{i \in \XX- \BB} \alpha^\BB(i)< 2 \sum_{\BB : \BB \in \CC} p_\BB y_\BB = 2\sum_{\BB \subseteq \FF: i \notin \BB} p_\BB y_\BB
\end{equation*}
where the last equality follows from the fact that $y_\BB=0$ if $\BB\notin\CC$.

Our algorithm picks sets in $\AA\pr\setminus\XX$ in the second while loop if the set of sets picked in the first while loop does not form a set cover. We now upper bound $\sum_{i\in\AA^\pr\setminus\XX}w(i)$ as follows.
\begin{align*}
    \sum_{i\in\AA^\pr\setminus\XX}w(i) = \sum_{i\in\AA^\pr\setminus\XX} \sum_{j\in[n]:e_j\in S_i} y_j = \sum_{j\in[n]}|\{i\in\AA^\pr\setminus\XX: e_j\in S_i\}|y_j \le f \sum_{j\in[n]}y_j
\end{align*}
The first equality follows from the fact that only tight sets are picked. We now bound \ALG.
\begin{align*}
    \ALG &= \sum_{i\in \AA^\pr} w(i)x_i \\
        &= \sum_{i\in \XX} w(i)x_i + \sum_{i\in \AA^\pr - \XX} w(i)x_i \\
        &\leq 2 \sum_{A : A \subseteq I} p_\AA y_\AA + f\sum_{j\in[n]}y_j\\
       &\leq f \left( \sum_{A : A \subseteq I} p_\AA y_\AA + \sum_{j\in[n]}y_j \right)\\
        &= f\text{\OPT}\qedhere
\end{align*}
\end{proof}

We note that our algorithm is a combinatorial algorithm based on the primal-dual framework --- in particular, we use LPs only to design and analyze our algorithm. We do not need to solve any LP. We obtain approximation algorithms for the \vcknapsack and \hsk problems as corollaries of \Cref{thm:approxf} by reducing these problems to \setck.

\begin{corollary}($\star$)\label{hsdfactor}
There exists a \textbf{d}-approximation algorithm for \hsk for minimizing the weight of the solution. The algorithm is combinatorial in nature and based on the primal-dual method.
\end{corollary}

\begin{corollary}($\star$)\label{2factorvc}
There exists a $2$-approximation algorithm for \vcknapsack for minimizing the weight of the solution. The algorithm is combinatorial in nature and based on the primal-dual method. 
\end{corollary}

We next present a $H_d$-approximation algorithm for \setck where $d$ is the maximum cardinality of any set in the input instance and $H_d=\sum_{i=1}^d \frac{1}{i}$ is the $d$-th harmonic number. The idea is to run the first while loop of \Cref{pdsc}, and then, if the selected sets do not cover the universe, then, instead of the second while loop of \Cref{pdsc}, we pick sets following the standard greedy algorithm for minimum weight set cover. We show that the algorithm achieves an approximation factor of max$(2,H_d)$ by analyzing it using the {\em dual fitting technique}.

\begin{algorithm}[htb]
    \caption{Max(2, $H_d$)-approximation algorithm for \setck}
    \begin{algorithmic}[1]
        \longversion{\State $y \gets 0$
        \State $\AA \gets \emptyset$}
        \shortversion{\State $y \gets 0, \AA \gets \emptyset$}
        \While{$\alpha(\AA) < p$}
             \State Increase $ y_\AA$ until for some $i \in \FF - \AA, \sum_{\BB\subseteq\FF: i \not\in \BB} \alpha^\BB(i)y_\BB = w(i)$
            \State $\AA \gets \AA \cup \{i\}$
        \EndWhile
        \longversion{\State $\XX \gets \AA$
        \State $\UU^\pr \gets \UU-\bigcup_{i \in \XX} \SS_i $
        \State $\FF^\pr \gets \FF-\XX$
        \State $I \gets \emptyset$
        \State $\hat{S}_i \gets S_i$ for all $i \in \FF^\pr$}
        \shortversion{\State $\XX \gets \AA, \UU^\pr \gets \UU-\bigcup_{i \in \XX} \SS_i, \FF^\pr \gets \FF-\XX, I \gets \emptyset, \hat{S}_i \gets S_i$ for all $i \in \FF^\pr$}
        \While{$I$ is not a set cover for $\UU^\pr$}
            \State $l \gets \text{arg min}_{i:\hat{S}_i \neq \emptyset} \frac{w(i)}{|\hat{S}_i|}$
            \State $I \gets I \cup \{l\}$
            \State $\hat{S}_i \gets \hat{S}_i - S_l$ for all $i\in \FF^\pr$
        \EndWhile
        \State \textbf{return} $\XX \cup \II$
    \end{algorithmic}
    \label{lognapprox}
\end{algorithm}

\begin{theorem}\label{hgapprosetcover}
\Cref{lognapprox} is a max(2, $H_d$)-approximation algorithm for the \setck problem for minimizing the weight of the solution, where $d$ is the maximum cardinality of any set in the input.
\end{theorem}

\begin{proof}
We follow the same notation defined in \Cref{lognapprox} in this proof. Since the first part of \Cref{lognapprox} is the same as the first part of \Cref{pdsc}, from the proof of \Cref{thm:approxf}, we have
\begin{equation*}
\sum_{i \in \XX} w(i) < 2\sum_{\BB \subseteq \FF: i \notin \BB} p_\BB y_\BB.
\end{equation*}

To bound the sum of weights of the sets in \II, we use the dual fitting technique. In particular, we will first construct an assignment of dual variables $(y_j)_{j\in[n]}$ with $\sum_{i \in \II} w_i = \sum_{j=1}^n y_j$. However, $(y_j)_{j\in[n]}$ may not satisfy the dual constraints involving those variables. However, and then show that \( y^\pr_j = \frac{1}{H_d} y_j, j\in[n] \) satisfies all dual constraints involving those variables. We concretize this idea below.

Whenever \Cref{lognapprox} includes a set $\hat{\SS}_i$ in \II, we define \( y_j = \frac{w(i)}{|\hat{\SS}_i|} \) for each \( j \in \hat{\SS}_i \). Since each \( j \in \hat{\SS}_i \) is uncovered in iteration when \Cref{lognapprox} picks the set $\hat{\SS}_i$, and is then covered for the remaining iterations of the algorithm (because we added subset \( \SS_i \) to \II), the dual variable \( y_j \) is set to a value exactly once. Furthermore, we see that
\begin{equation*}
w(i) = \sum_{i: j \in \hat{\SS}_i} y_j, \forall i\in\II
\end{equation*}
since the weight $w(i)$ of the set $i$ is distributed among $y_j, j\in\hat{\SS}_i$. Hence, we have,
\begin{equation*}
\sum_{j \in \II} w(i) = \sum_{i=1}^n y_j.
\end{equation*}

We claim that \( y'_j = \frac{1}{H_d} y_j \) for all $j\in[n]$ satisfies the dual constraints involving these variables. For that, we need to show that for each subset \( \SS_i, i\in[m] \),
\begin{equation*}
\sum_{i: j \in \SS_i} y'_j \leq w(i).
\end{equation*}
Pick an arbitrary subset \( \SS_i \) and an arbitrary iteration $k$ of the second while loop of \Cref{lognapprox}. Let \el be the number of iterations that the second while loop of \Cref{lognapprox} makes and \( a_k \) the number of elements
in this subset that is still uncovered at the beginning of the \( k \)-th iteration, so that \( a_1 = |\SS_i| \),
and \( a_{\ell+1} = 0 \). Let \( A_k \) be the set of uncovered elements of \( \SS_i \) covered in the \( k \)-th iteration, so that
\(|A_k| = a_k - a_{k+1}\). If subset \( \SS_q \) is chosen in the \( k \)-th iteration, then for each element \( j \in A_k \)
covered in the \( k \)-th iteration, we have
\begin{equation*}
y'_j = \frac{w_q}{H_d |\hat{\SS}_q|} \leq \frac{w(i)}{H_d a_k},
\end{equation*}
where \( \hat{\SS}_q \) is the set of uncovered elements of \( \SS_q \) at the beginning of the \( k \)-th iteration. The inequality follows because if \( \SS_q \) is chosen in the \( k \)-th iteration, it must minimize the ratio of its
weight to the number of uncovered elements it contains. Thus,
\begin{align*}
\sum_{i: e_j \in \SS_i} y'_j &= \sum_{k=1}^l \sum_{j\in[n]: j \in A_k} y'_j \\
&\leq \sum_{k=1}^l (a_k - a_{k+1}) \frac{w(i)}{H_d a_k} \\
&\leq \frac{w(i)}{H_d} \sum_{k=1}^l \frac{a_k - a_{k+1}}{a_k} \\
&\leq \frac{w(i)}{H_d} \sum_{k=1}^l \left( \frac{1}{a_k} + \frac{1}{a_k-1} + \cdots + \frac{1}{a_{k+1} + 1} \right) \\
&\leq \frac{w(i)}{H_d} \sum_{i=1}^{|\SS_i|} \frac{1}{i} \\
&= \frac{w(i)}{H_d} H_{|\SS_i|} \\
&\leq w(i),
\end{align*}
where the final inequality follows because \( |\SS_i| \leq d \). Hence, $((y_\BB)_{\BB\in\FF},(y^\pr_j)_{j\in[n]})$ is a dual feasible solution. We now bound \ALG as follows. 
\begin{align*}
    \ALG  &= \sum_{i\in \XX} w(i)x_i + \sum_{i\in \II} w(i)x_i\\
    &\leq 2 \sum_{A : A \subseteq \FF} p_\AA y_\AA  + H_d \sum_{j\in[n]}y_j \\  
    &= \max(2, H_d) \left( \sum_{A : A \subseteq \FF} p_\AA y_\AA  + \sum_{j\in[n]}y_j \right)\\
    &= \max(2, H_d)\cdot \text{\OPT}\qedhere
\end{align*}
\end{proof}

The approximation guarantees of \Cref{hgapprosetcover,thm:approxf} are the best possible approximation guarantees, up to any additive constant $\eps>0$, that any polynomial time algorithm hopes to achieve, assuming standard complexity-theoretic assumptions.

\begin{theorem}\label{approx-lb}
    Let $\eps>0$ be any constant. Then we have the following:
    \begin{enumerate}
        \item There is no polynomial-time $(1-\eps)\ln n$ factor approximation algorithm for \setck unless every problem in \NP admits a quasi-polynomial time algorithm.

        \item Assuming Unique Games Conjecture (UGC), there is no polynomial-time $(1-f)\ln n$ factor approximation algorithm for \setck.

        \item Assuming Unique Games Conjecture (UGC), there is no polynomial-time $(1-d)\ln n$ factor approximation algorithm for \hsk.
    \end{enumerate}
\end{theorem}

\begin{proof}
    We present an approximation preserving reduction from \setc to \setck. Let $([n],\FF,k)$ be an arbitrary instance of \setc. We construct an instance of \setck where the universe is $[n]$, the family of sets is \FF, the weight and value of every set in \FF is $1$, the bag size and the target value both are $k$. Clearly, the \setc instance is a \yes instance if and only if the \setck instance is a \yes instance. Moreover, if there exists a polynomial-time $\alpha$-approximation algorithm for minimizing the weight of the solution of \setck, then there exists a polynomial-time $\alpha$-approximation algorithm for minimizing the weight of \setc. Now, the results on \setck follow from $(1-\eps)\ln n$ and $(1-f)\ln n$ lower bounds for \setc under respective assumptions~\cite{DBLP:journals/jacm/Feige98,DBLP:journals/jcss/KhotR08}. For \hsk also, a similar reduction and the known lower bound on the approximability of \hs proves the result.
\end{proof}

We next focus on maximizing the value of the solution given a knapsack size as input. Surprisingly, for all the problems studied in this paper, we show that there is no $\rho$-approximation algorithm for any of our problems for any $\rho>1$.

\begin{theorem}\label{hardnessofapprox}
For any $\rho > 1$, there does not exist a $\rho$-approximation algorithm for maximizing the value of the solution given the size of the knapsack for \setck, \hsk, \vcknapsack, \minimalvcknapsack, \minimumvcknapsack, and \vcknapsackbudget unless $\Pb = \NP$.
\end{theorem}

\begin{proof}
We prove the result for \setck. For other problems, the proof is similar. Suppose there exists a $\rho$-approximation algorithm \AA for \setck for maximizing the value of the solution. We show that we can design a polynomial time algorithm for unweighted \setc using \AA, which is known to be \NPC. Let $([n],\FF,k)$ be an arbitrary instance of \setc. We construct an instance of \setck where the universe is $[n]$, the family of sets is \FF, the weight and value of every set in \FF is $1$, and the bag size is $k$. Thus, every feasible solution of the \setck instance can contain at most $k$ sets from \FF. Hence, any $\rho$ approximation algorithm must produce a feasible solution of the \setc instance whenever it exists, thereby solving unweighted \setc in polynomial time, assuming \AA runs in polynomial time.

Similarly, for other problems in the theorem, we can show that one can have a polynomial-time algorithm for the ``base problem" if there exists a $\rho$-approximation algorithm for maximizing the value of the solution of the knapsack version.
\end{proof}

The inapproximability barriers of \Cref{approx-lb,hardnessofapprox} can be overcome using the framework of \FPT-approximation. In particular, we will show \FPT $(1-\eps)$-approximation algorithms, parameterized by the treewidth of the input graph, for all four variants of vertex cover knapsack for maximizing the value of the solution.

\section{Results: Parameterized Complexity}
\label{parameterized}

We study the four variants of \vcknapsack using the framework of parameterized complexity. For that, we consider the treewidth of the input graph as a parameter. With respect to treewidth, we design algorithms that run in time single exponential in the treewidth times polynomial in $n$ (number of vertices), size $s$, and target value $p$ of the knapsack. We then use these algorithms to develop a $(1-\eps)$-approximation algorithm for maximizing the value of the solution that runs in time single exponential in the treewidth times polynomial in the number $n$ of vertices, $1/\eps$ and \(\sum_{v\in\VV}\alpha(v)\).

We know that there exists a $\OO\left(2^{\tw}\cdot \tw^{\OO(1)}\cdot n\right)$ time algorithm for the \vc problem \cite{cygan2015parameterized}. It turns out that it is relatively easy to modify that algorithm to design a similar algorithm \vcknapsack, \vcknapsackbudget, and \minimumvcknapsack.

\begin{theorem}\label{vckparameterzied}
  There is an algorithm for \vcknapsack with running time $\OO\left(2^{\tw}\cdot n^{\OO(1)} \cdot {\sf min}\{s^2,p^2\}\right)$.   
\end{theorem} 
\begin{proof}
Let $(G = (V,E),{(w(u))_{u \in V_G}, (\alpha(u))_{u\in V_G}}, s,p)$ be an input instance of \vcknapsack such that $\tw=tw(G)$. For technical purposes, we guess a vertex $v \in \mathcal{U}$ --- once the guess is fixed, we are only interested in finding solution subsets $\mathcal{U}'$ that contain $v$ and $\mathcal{U}$ is one such candidate. We also consider a nice edge tree decomposition $(\mathbb{T} = (V_{\mathbb{T}},E_{\mathbb{T}}),\mathcal{X})$ of $G$ that is rooted at a node $r$, and where $v$ has been added to all bags of the decomposition. Therefore, $X_{r} = \{v\}$ and each leaf bag is the singleton set $\{v\}$.

We define a function $\ell: V_{\mathbb{T}} \rightarrow \mathbb{N}$ as follows.
For a vertex $t \in V_\mathbb{T}$, $\ell(t) = \sf{dist}_\mathbb{T}(t,r)$, where $r$ is the root. Note that this implies that $\ell(r) = 0$. Let us assume that the values that $\ell$ take over the nodes of $\mathbb{T}$ are between $0$ and $L$. For a node $t\in V_\mathbb{T}$, we denote the set of vertices in the bags in the subtree rooted at $t$ by $V_t$ and $G_t=G[V_t]$. Now, we describe a dynamic programming algorithm over $(\mathbb{T},\mathcal{X})$. We have the following states.

    \textbf{States:} We maintain a DP table $D$ where a state has the following components:
    \begin{enumerate}
    \item $t$ represents a node in $V_\mathbb{T}$.
    \item $\mathbb{S}$ represents a subset of the vertex subset $X_t$
    \end{enumerate}
    \textbf{Interpretation of States}: For each node $t \in \mathbb{T}$, we keep a list $D[t,\mathbb{S}]$ for each $\mathbb{S} \subseteq X_t$, holding the (weight, value) pair of all undominated vertex cover knapsacks $\mathbb{S}'$ of $G_v$ with $\mathbb{S}' \cap X_t = \mathbb{S}$ if such a $\mathbb{S}'$ exist, and $D[t,\mathbb{S}] = \infty$ otherwise. We say one vertex cover $S_1\subseteq V$ dominates another vertex cover $S_2\subseteq V$ if $w(S_1)\le w(S_2)$ and $\alpha(S_1)\ge\alpha(S_2)$ with at least one inequality being strict.

For each state $[t,\mathbb{S}]$, we initialize $D[t,\mathbb{S}]$ to the list $\{(0,0)\}$.

    \textbf{Dynamic Programming on $D$}: We update the table $D$ as follows. We initialize the table for states with nodes $t\in V_\mathbb{T}$ such that $\ell(t)=L$. When all such states are initialized, then we move to update states where the node $t$ has $\ell(t) = L-1$, and so on, till we finally update states with $r$ as the node --- note that $\ell(r) =0$. For a particular $j$, $0\leq j< L$ and a state $[t,\mathbb{S}]$ such that $\ell(t) = j$, we can assume that $D[t',\mathbb{S}']$ have been computed for all $t'$ such that $\ell(t')>j$ and all subsets $\mathbb{S}'$ of $X_{t'}$. Now we consider several cases by which $D[t,\mathbb{S}]$ is updated based on the nature of $t$ in $\mathbb{T}$:
    
    \begin{enumerate}
    \item Suppose $t$ is a leaf node with $X_{t} = \{v\}$. Then $D[t,\{v\}]$ = $(w(v), \alpha(v))$ and $D[t,\phi]$ stores the pair $(0,0)$.

    \item Suppose $t$ is an introduce node. Then it has only one child $t'$ where $X_{t'} \subset X_{t}$ and there is exactly one vertex $u$ that belongs to $X_{t}$ but not $X_{t'}$. Then for all $\mathbb{S} \subseteq X_t$: If $\mathbb{S}$ is not a vertex cover of $G[X_t]$, we set $D[t,\mathbb{S}] = \infty$. Otherwise, if $u \in \mathbb{S}$, then for every pair $(w,\alpha)$ in $D[t',\mathbb{S} \setminus \{u\}]$ with $w + w(u) \leq s$, we add $(w+w(u),\alpha+\alpha(u))$ to the set in $D[t,\mathbb{S}]$. If $u \notin \mathbb{S}$, then we copy all pairs of $D[t',\mathbb{S}]$ to $D[t,\mathbb{S}]$.
    
    \item Suppose $t$ is a forget vertex node. Then it has only one child $t'$, and there is a vertex $u$ such that $X_t\cup\{u\} = X_{t'}$. Then for all $\mathbb{S} \subseteq X_t$, we copy all feasible undominated pairs stored in $D[t',\mathbb{S}]$ to $D[t,\mathbb{S}]$. If any pair stored in $D[t,\mathbb{S}]$ is dominated by any pair of $D[t',\mathbb{S} \cup \{u\}]$, we copy only the undominated pairs to $D[t,\mathbb{S}]$.
  
    \item Suppose $t$ is a join node. Then it has two children $t_1,t_2$ such that $X_t = X_{t_1} = X_{t_2}$. Then for all $\mathbb{S} \subseteq X_t$, let $(w_{\mathbb{S}}, \alpha_{\mathbb{S}})$ be the total weight and value of the vertices in $\mathbb{S}$. Consider a pair $(w_1,\alpha_1)$ in $D[t_1,\mathbb{S}]$ and a pair $(w_2,\alpha_2)$ in $D[t_2,\mathbb{S}]$. Suppose $w_1 + w_2 - w_{\mathbb{S}} \leq s$. Then we add $(w_1 + w_2 - w_{\mathbb{S}}, \alpha_1+\alpha_2-\alpha_{\mathbb{S}})$ to $D[t,\mathbb{S}]$.
    \end{enumerate}
    
Finally, in the last step of updating $D[t,\mathbb{S}]$, we go through the list saved in $D[t,\mathbb{S}]$ and only keep undominated pairs. The proof of correctness of this algorithm is analogous to the proof of correctness of a similar algorithm for \vc.

\textbf{Running time}: There are $n$ choices for the fixed vertex $v$. Upon fixing $v$ and adding it to each bag of $(\mathbb{T}, \mathcal{X})$ we consider the total possible number of states. For every node $t$, we have $2^{|X_t|}$ choices of ${\mathbb{S}}$. For each state, for each $w$, there can be at most one pair with $w$ as the first coordinate; similarly, for each $\alpha\le p$, there can be at most one pair with $\alpha$ as the second coordinate. Thus, the number of undominated pairs in each $D[t,\mathbb{S}]$ is at most ${\sf min}\{s,p\}$ time. Since the treewidth of the input graph \GG is at most \tw, it is possible to construct a data structure in time $\tw^{\OO(1)} \cdot n$ that allows performing adjacency queries in time $\OO(\tw)$.  For each node $t$, it takes time $\OO\left(2^{\tw} \cdot \tw^{\OO(1)} \cdot {\sf min}\{s^2,p^2\}\right)$ to compute all the values $D[t,\mathbb{S}]$ and remove all undominated pairs. Since we can assume w.l.o.g that the number of nodes of the given tree decompositions is $\OO(\tw \cdot n)$, and there are $n$ choices for the vertex $v$, the running time of the algorithm is $\OO\left(2^{\tw}\cdot n^{\OO(1)} \cdot {\sf min}\{s^2,p^2\}\right)$.
\end{proof}

It turns out that the main idea of the algorithm of \Cref{vckparameterzied} can be modified to obtain algorithms for \minimumvcknapsack and \vcknapsackbudget with similar running times.

\begin{theorem}($\star$)\label{minvckparameterzied}
  There is an algorithm for \minimumvcknapsack with running time $\OO\left(2^{\tw}\cdot n^{\OO(1)} \cdot {\sf min}\{s^2,p^2\}\right)$.   
 \end{theorem}

\begin{theorem}\label{vckbudparameterzied}
There is an algorithm for \vcknapsackbudget with running time $\OO\left(2^{\tw}\cdot n^{\OO(1)} \cdot {\sf min}\{s^2,p^2\}\right)$.     
\end{theorem}
\begin{proof}
Let $(G = (V,E),{(w(u))_{u \in V_G}, (\alpha(u))_{u\in V_G}}, s,p)$ be an input instance of \vcknapsack such that $\tw=tw(G)$. We consider a nice tree decomposition $(\mathbb{T} = (V_{\mathbb{T}},E_{\mathbb{T}}),\mathcal{X})$ of $G$ that is rooted at a node $r$. 

We define a function $\ell: V_{\mathbb{T}} \rightarrow \mathbb{N}$ as follows.
For a vertex $t \in V_\mathbb{T}$, $\ell(t) = \sf{dist}_\mathbb{T}(t,r)$, where $r$ is the root. Note that this implies that $\ell(r) = 0$. Let us assume that the values that $\ell$ take over the nodes of $\mathbb{T}$ are between $0$ and $L$. Now, we describe a dynamic programming algorithm over $(\mathbb{T},\mathcal{X})$. We have the following states.

    \textbf{States:} We maintain a DP table $D$ where a state has the following components:
    \begin{enumerate}
    \item $t$ represents a node in $V_\mathbb{T}$.
    \item $\mathbb{S}$ represents a subset of the vertex subset $X_t$
    \item $i$ denotes the size of the vertex cover 
    \end{enumerate}
    \textbf{Interpretation of States}: For each node $t \in \mathbb{T}$ we keep a list of feasible and undominated pairs in $D[t,\mathbb{S},i]$ for each $\mathbb{S} \subseteq X_t$, such that the size of the vertex cover of $G_t$ is at most $i$, holding the vertex cover knapsack $\mathbb{S}'$ of $G_t$ with $\mathbb{S}' \cap X_t = \mathbb{S}$ if such a $\mathbb{S}'$ exist, and $D[t,\mathbb{S},i] = \infty$ otherwise.

For each state $[t,\mathbb{S},i]$, we initialize $D[t,\mathbb{S},i]$ to the list $\{(0,0)\}$. Our computation shall be such that in the end each $D[t,\mathbb{S},i]$ stores the set of all undominated feasible pairs $(w,\alpha)$ for the state $[t,\mathbb{S},i]$.

    \textbf{Dynamic Programming on $D$}: We describe the following procedure to update the table $D$. We start updating the table for states with nodes $t\in V_\mathbb{T}$ such that $\ell(t)=L$. When all such states are updated, then we move to update states where the node $t$ has $\ell(t) = L-1$, and so on till we finally update states with $r$ as the node --- note that $\ell(r) =0$. For a particular $j$, $0\leq j< L$ and a state $[t,\mathbb{S},i]$ such that $\ell(t) = j$, we can assume that $D[t',\mathbb{S}',i]$ have been evaluated for all $t'$ and $i$ such that $\ell(t')>j$ and all subsets $\mathbb{S}'$ of size $i$ of $X_{t'}$. Now we consider several cases by which $D[t,\mathbb{S},i]$ is updated based on the nature of $t$ in $\mathbb{T}$:
    
    \begin{enumerate}
     \item Suppose $t$ is a leaf node with $X_{t} = \{v\}$ . Then $D[t,\{v\},1]$ = $(w(v), \alpha(v))$ and $D[t,\phi,1]$ stores the pair $(0,0)$.

    \item Suppose $t$ is an introduce node. Then it has an only child $t'$ where $X_{t'} \subset X_{t}$ and there is exactly one vertex $v$ that belongs to $X_{t}$ but not $X_{t'}$. Then for all $\mathbb{S} \subseteq X_t$: First, suppose $\mathbb{S}$ is not a vertex cover of $G[X_t]$, we set $D[t,\mathbb{S},i] = \infty$. \\
    Next, suppose $u \in \mathbb{S}$. Then for each pair $(w,\alpha)$ in $D[t',\mathbb{S},i]$, if $w + w(u) \leq s$ we add $(w+w(u),\alpha+\alpha(u))$ to the set in $D[t,\mathbb{S} \setminus \{u\},i-1]$.\\
    Otherwise we copy all pairs of $D[t',\mathbb{S},i]$ to $D[t,\mathbb{S},i]$.
    
     \item Suppose $t$ is a forget vertex node. Then it has an only child $t'$ where $X_t \subset X_{t'}$ and there is exactly one vertex $u$ that belongs to $X_{t'}$ but not to $X_t$. Then for all $\mathbb{S} \subseteq X_t$, we copy all feasible undominated pairs stored in $D[t',\mathbb{S},i]$ to $D[t,\mathbb{S},i]$. If any pair stored in $D[t,\mathbb{S},i]$ is dominated by any pair of $D[t',\mathbb{S} \cup \{u\}, i + 1]$, we copy only the undominated pairs to $D[t,\mathbb{S},i]$. 
  
    \item Suppose $t$ is a join node. Then it has two children $t_1,t_2$ such that $X_t = X_{t_1} = X_{t_2}$. Then for all $\mathbb{S} \subseteq X_t$, let $(w_{\mathbb{S}}, \alpha_{\mathbb{S}})$ be the total weight and value of the vertices in $\mathbb{S}$. Consider a pair $(w_1,\alpha_1)$ in $D[t_1,\mathbb{S},i]$ and a pair $(w_2,\alpha_2)$ in $D[t_2,\mathbb{S},i]$. Suppose $w_1 + w_2 - w_{\mathbb{S}} \leq s$. Then we add $(w_1 + w_2 - w_{\mathbb{S}}, \alpha_1+\alpha_2-\alpha_{\mathbb{S}})$ to $D[t,\mathbb{S},i]$.
    \end{enumerate}
    
Finally, in the last step of updating $D[t,\mathbb{S},i]$, we go through the list saved in $D[t,\mathbb{S},i]$ and only keep undominated pairs.

\textbf{Running time}: There are $n$ choices for the fixed vertex $v$. Upon fixing $v$ and adding it to each bag of $(\mathbb{T}, \mathcal{X})$ we consider the total possible number of states. For every node $t$, we have $2^{|X_t|}$ choices of ${\mathbb{S}}$. For each choice of $\mathbb{S}$, we keep track of the size of vertex cover through invariant $i$ that ranges from 1 to $n$. For each state, for each $w$, there can be at most one pair with $w$ as the first coordinate; similarly, for each $\alpha\le p$, there can be at most one pair with $\alpha$ as the second coordinate. Thus, the number of undominated pairs in each $D[t,\mathbb{S},i]$ is at most ${\sf min}\{s,p\}$ time. Since the treewidth of the input graph \GG is at most \tw, it is possible to construct a data structure in time $\tw^{\OO(1)} \cdot n$ that allows performing adjacency queries in time $\OO(\tw)$.  For each node $t$, it takes time $2^{\tw+1} \cdot \tw^{\OO(1)} \cdot n \cdot {\sf min}\{s^2,p^2\}$ to compute all the values $D[t,\mathbb{S},i]$ and remove all undominated pairs. Since we can assume w.l.o.g that the number of nodes of the given tree decompositions is $\OO(\tw \cdot n)$, the running time of the algorithm is $\OO\left(2^{\tw}\cdot n^{\OO(1)} \cdot {\sf min}\{s^2,p^2\}\right)$.
\end{proof}

However, the approach of \Cref{vckparameterzied} breaks down for \minimalvcknapsack. This is so because a minimal vertex cover (unlike a vertex cover, a vertex cover of size at most $k$, and a minimum vertex cover) of a graph may not be a minimal vertex cover of some of its induced subgraphs. For this reason, it is not enough to keep track of all minimal vertex covers of the subgraphs rooted at some tree node intersecting the bag at certain subsets. Intuitively speaking, we tackle this problem by adding another subset of vertices in the ``indices'' of the DP table that will be part of some minimal vertex cover of some other induced subgraph.

\begin{theorem}\label{fptminimal}
    There is an algorithm for \minimalvcknapsack with running time $\OO\left(16^{\tw}\cdot n^{\OO(1)} \cdot {\sf min}\{s^2,p^2\}\right)$.
\end{theorem}
\begin{proof}
Let $(G = (V,E),{(w(u))_{u \in V}, (\alpha(u))_{u\in V}}, s,p)$ be an input instance of \minimalvcknapsack and $(\mathbb{T} = (V_{\mathbb{T}},E_{\mathbb{T}}),\mathcal{X})$ a nice tree decomposition rotted at node $r$ of treewidth $\tw(G)$.

We define a function $\ell: V_{\mathbb{T}} \rightarrow \mathbb{N}$.
For a vertex $t \in V_\mathbb{T}$, $\ell(t) = \sf{dist}_\mathbb{T}(t,r)$, where $r$ is the root. Note that this implies that $\ell(r) = 0$. Let us assume that the values that $\ell$ take over the nodes of $\mathbb{T}$ are between $0$ and $L$. For a node $t\in V_\mathbb{T}$, we denote the set of vertices in the bags in the subtree rooted at $t$ by $V_t$ and $G_t=G[V_t]$. We now describe a dynamic programming algorithm over $(\mathbb{T},\mathcal{X})$ for \minimalvcknapsack.

    \textbf{States:} We maintain a DP table $D$ where a state has the following components:
    \begin{enumerate}
    \item $t$ represents a node in $V_\mathbb{T}$.
    \item $V_1$, $V_2$ are subsets of the bag $X_t$, not necessarily disjoint.
    \item $V_1$ represents the intersection of $X_t$ with a minimal vertex cover of the subgraph $G_t[(V_t\setminus V_2)\cup V_1]$.
    \end{enumerate}
    \textbf{Interpretation of States:} For each node $t \in \mathbb{T}, V_1, V_2\subseteq X_t$ and ``undominated'' minimal vertex cover $S$ of the induced graph $G_t[(V_t\setminus V_2)\cup V_1]$ such that $S\cap X_t=V_1$, we store $(w(S),\alpha(S))$ in the list $D[t,V_1,V_2]$. We say a minimal vertex cover $S_1\subseteq (V_t\setminus V_2)\cup V_1$ dominates another minimal vertex cover $S_2\subseteq (V_t\setminus V_2)\cup V_1$ if $w(S_1)\le w(S_2)$ and $\alpha(S_1)\ge\alpha(S_2)$ with at least one inequality being strict. We say a minimal vertex cover of $G_t[(V_t\setminus V_2)\cup V_1]$ {\em undominated} if no other minimal vertex cover of $G_t[(V_t\setminus V_2)\cup V_1]$ dominates it.

For each state $D[t,V_1,V_2]$, we initialize $D[t,V_1,V_2]$ to the list $\{(0,0)\}$.

    \textbf{Dynamic Programming on $D$:} We first update the table for states with nodes $t\in V_\mathbb{T}$ such that $\ell(t)=L$. When all such states are updated, then we update states where the level of node $t$ is $L-1$, and so on, till we finally update states with $r$ as the node --- note that $\ell(r) =0$. For a particular $j$, $0\leq j< L$ and a state $[t,V_1,V_2]$ such that $\ell(t) = j$, we can assume that $D[t,V_1,V_2]$ have been evaluated for all $t'$, such that $\ell(t')>j$ and all subsets $V_1^\pr$ and $V_2^\pr$ of $X_{t'}$. Now we consider several cases by which $D[t,V_1,V_2]$ is updated based on the nature of $t$ in $\mathbb{T}$:
    
    \begin{enumerate}
    \item Suppose $t$ is a leaf node with $X_{t} = \{v\}$ . Then $D[t,v,\emptyset]$ = $(w(v), \alpha(v))$, or $D[t,\emptyset,v]$ = $(0,0)$ and $D[t,\emptyset, \emptyset]$ stores the pair $(0,0)$.

    \item Suppose $t$ is an introduce node. Then it has an only child $t'$ where $X_{t^\pr}\cup\{u\} = X_{t}$. Then for all $\mathbb{S} \subseteq X_t$: If $\mathbb{S}$ is not a vertex cover of $G[X_t]$, we set $D[t,V_1,V_2] = (\infty, \infty)$. \\
    Otherwise, we have three cases:
    \begin{enumerate}
        \item Case 1: If $u \not\in V_1 \cup V_2$, then we copy each pair $(w, \alpha)$ from $D[t',V_1,V_2]$
        \item Case 2: If $u \in V_1\setminus V_2$, then 
        \begin{enumerate}
            \item we check if $ N(u) \setminus V_1 \ne \emptyset$, then
        
            \begin{enumerate}
                \item for each pair $(w, \alpha)$ in $D[t',V_1 \setminus \{u\},V_2]$, if $w + w(u) \leq s$, then we put $(w + w(u), \alpha +\alpha(u))$ in $D[t',V_1 \setminus \{u\},V_2]$ to $D[t,V_1,V_2]$.
                
                \item for each pair $(w, \alpha)$ in $D[t',V_1 \setminus \{u\},V_2]$, if $w + w(u) > s$, we put $(w, \alpha)$ to $D[t,V_1,V_2]$.
            \end{enumerate}

            \item Otherwise we store $(\infty, \infty)$.
        \end{enumerate}
        \item Case 3: If $u \in V_2$, then we set $D[t,V_1,V_2]=D[t',V_1, V_2 \setminus \{u\}]$. 
    \end{enumerate}

    \item Suppose $t$ is a forget vertex node. Then it has an only child $t^\pr$ where $X_t\cup\{u\} = X_{t^\pr}$. We copy all the pairs from $D[t',V_1 \cup \{u\}, V_2]$, $D[t',V_1, V_2 \cup \{u\}]$ and $D[t',V_1 , V_2]$ to $D[t,V_1 , V_2]$ and remove all dominated pairs.
  
    \item Suppose $t$ is a join node. Then it has two children $t_1,t_2$ such that $X_t = X_{t_1} = X_{t_2}$. Let $(w(V_1 \cap V_2), \alpha(V_1 \cap V_2))$ be the total weight and value of the vertices in $V_1 \cap V_2$. Then for all $W_1, W_2 \subseteq V_1 \subseteq X_t$, consider a pair $(w_1,\alpha_1)$ in $D[t_1,W_1, W_2 \cup V_2]$ and a pair $(w_2,\alpha_2)$ in $D[t_2,W_2, W_1 \cup V_2]$. Suppose $w_1 + w_2 - w(V_1 \cap V_2) \leq s$. Then we add ($w_1 + w_2 - w(V_1 \cap V_2),  \alpha_1+\alpha_2-\alpha(V_1 \cap V_2)$) to $D[t,V_1, V_2]$.
    \end{enumerate}
    
Finally, in the last step of updating $D[t,V_1, V_2]$, we go through the list saved in $D[t,V_1, V_2]$ and only keep undominated pairs.

\textbf{Correctness of the algorithm:} Recall that we are looking for a solution $\mathcal{U}$ that contains the fixed vertex $v$ that belongs to all bags of the tree decomposition. In each state we maintain the invariant $V_1, V_2 \subseteq X_t$ such that $V_1 = X_t \cap$ minimal vertex cover knapsack of $G_t \setminus $ edges incident on $ V_2 \setminus V_1$. First, we show that a pair $(w,\alpha)$ belonging to $D[t,V_1, V_2]$ for a node $t \in V_\mathbb{T}$ and a subset $\mathbb{S}$ of $X_t$ corresponds to a minimal vertex cover knapsack $H$ in $G_t$. Recall that $X_r = \{v\}$. Thus, this implies that a pair $(w,\alpha)$ belonging to $D[r, V_1 = \{v\}, V_2 = \emptyset )]$ corresponds to a minimal vertex cover knapsack of $G$. Moreover, the output is a pair that is feasible and with the highest value. 

In order to show that a pair $(w,\alpha)$ belonging to $D[t, V_1, V_2]$ for a node $t \in V_\mathbb{T}$ and a subset $V_1$ of $X_t$ corresponds to a minimal vertex cover knapsack of $G_t \setminus $ edges incident on $ V_2 \setminus V_1$, we need to consider the cases of what $t$ can be:

\begin{enumerate}

    \item \textbf{Leaf node}: Recall that in our modified nice tree decomposition we have added a vertex $v$ to all the bags. Suppose a leaf node $t$ contains a single vertex $v$, $D[t,v,\emptyset] = (w(v),\alpha(v))$, $D[t,\emptyset,v] = (0,0)$ and $D[L,\emptyset, \emptyset]$ stores the pair $(0, 0)$. This is true in particular when $j = L$, the base case. From now we can assume that for a node $t$ with $\ell(t) = j < L$ and all subsets $V_1, V_2 \subseteq X_t$, $D[t',V_1^{\pr\pr},V_2^{\pr\pr}]$ entries are correct and correspond to minimal vertex cover in $G_{t'} \setminus $ edges incident on $ V_2^{\pr\pr} \setminus V_1^{\pr\pr}$. when $\ell(t') > j$.

    \item \textbf{Introduce node}: When $t$ is an introduce node, there is a child $t'$. We are introducing a vertex $u$ and the edges associated with it in $G_t$. Since $\ell(t') > \ell(t)$, by induction hypothesis all entries in $D[t',V_1^{\pr\pr} = V_1 \setminus \{u\}, V_2^{\pr\pr} =V_2\setminus \{u\}]$, $D[t', V_1^{\pr\pr}=V_1\setminus \{u\},V_2^{\pr\pr} = V_2]$, and $D[t', V_1^{\pr\pr}= V_1,V_2^{\pr\pr} = V_2 \setminus \{u\}]$, $\forall$ $V_1^{\pr\pr} , V_2^{\pr\pr} \subseteq X_{t'}$ are already computed and feasible. We update pairs in $D[t,V_1,V_2]$ from $D[t',V_1 \setminus \{u\}, V_2]$ or $D[t',V_1,V_2 \setminus \{u\}]$ or $D[t',V_1 \setminus \{u\}, V_2 \setminus \{u\}]$  such that either $u$ is considered as part of a minimal solution in $G_{t} \setminus $ edges incident on $ V_2 \setminus V_1$ or not. 

    \item \textbf{Forget node}:  When $t$ is a forget node, there is a child $t'$. We are forgetting a vertex $u$ and the edges associated with it in $G_t$. Since $\ell(t') > \ell(t)$, by induction hypothesis all entries in $D[t',V_1^{\pr\pr} = V_1 \cup \{u\}, V_2^{\pr\pr} = V_2]$, $D[t',V_1^{\pr\pr}=V_1,V_2^{\pr\pr}=V_2 \cup \{u\}]$, and  $D[t',V_1^{\pr\pr}=V_1,V_2^{\pr\pr}=V_2]$, $\forall$ $V_1^{\pr\pr}, V_2^{\pr\pr} \subseteq X_{t'}$ are already computed and feasible. We copy each undominated $(w,\alpha)$ pair stored in $D[t',V_1 \cup \{u\}, V_2]$, $D[t',V_1, V_2 \cup \{u\}]$ and $D[t',V_1 , V_2]$ to $D[t,V_1,V_2]$.

    \item \textbf{Join node}: When $t$ is a join node, there are two children $t_1$ and $t_2$ of $t$, such that $X_t = X_{t_1} = X_{t_2}$. For all subsets  $V_1 \subseteq X_t$ we partition $V_1$ into two subsets $W_1$ and $W_2$ (not necessarily disjoint) such that $W_1$ is the intersection of $X_{t_1}$ with minimal solution in the graph $G_{t_1} \setminus $ edges incident on $(W_2 \cup V_2) \setminus W_1$. Similarly, $W_2$ is the intersection of $X_{t_2}$ with minimal solution in the graph $G_{t_2} \setminus $ edges incident on $(W_1 \cup V_2) \setminus W_2$. By the induction hypothesis, the computed entries in $D[t_1,W_1, W_2 \cup V_2]$ and $D[t_2,W_2, W_1 \cup V_2]$ where $W_1 \cup W_2 = V_1$ are correct and store the non redundant minimal vertex cover for the subgraph $G_{t_1}$ in $W_1$ and similarly, $W_2$ for $G_{t_2}$. Now we add ($w_1 + w_2 - w(V_1 \cap V_2),  \alpha_1+\alpha_2-\alpha\{V_1 \cap V_2))$) to $D[t,V_1, V_2]$.

\end{enumerate}

What remains to be shown is that an undominated feasible solution $\mathcal{U}$ of \minimalvcknapsack in $G$ is contained in $D[r,\{v\},\emptyset]$. Let $w$ be the weight of $\mathcal{U}$ and $\alpha$ be the value. Recall that $v \in \mathcal{U}$. For each $t$, we consider the subgraph $G_t$ and observe how the minimal solution $\mathcal{S}'$ interacts with $G_t$. Let $\hat{V}_1, \hat{V}_2, \ldots, \hat{V}_m$ be components of $G_t \cap \mathcal{U}$ and let for each $1 \leq i \leq m$, $\mathcal{S}_i = X_t \cap \hat{V}_i$. Also, let $\hat{V}_0 = X_t \setminus \mathcal{U}$. Consider $\mathcal{S} = (\hat{V}_0, \hat{V}_1, \hat{V}_2, \ldots, \hat{V}_m)$. For each $\mathcal{S}_i$, we define subsets $V_1$ and $V_2$ such that $V_1, V_2 \subseteq \mathcal{S}_i$, and $V_1 \cup V_2 = \mathcal{S}_i$, $\forall i \in [m]$. The algorithm updates in $D[t,V_1,V_2]$ the pair $(w',\alpha')$ for the subsolution $(G_{t} \setminus $ edges incident on $ V_2) \cap \mathcal{U}$. Therefore, $D[r,\{v\},\emptyset]$ contains the pair $(w,\alpha)$. Thus, we are done.

\textbf{Running Time:}
There are $n$ choices for the fixed vertex $v$. Upon fixing $v$ and adding it to each bag of $(\mathbb{T}, \mathcal{X})$, we consider the total possible number of states. Observe that the number of subproblems is small: for every node $t$, we have only $2^{|X_t|}$ choices for $V_1$ and $V_2$. Hence, the number of entries of the DP table is $\OO\left(4^\tw \cdot n\right)$. For each state, since we are keeping only undominated pairs, for each weight $w$ there can be at most one pair with $w$ as the first coordinate; similarly, for each value $\alpha$ there can be at most one pair with $\alpha$ as the second coordinate. Thus, the number of undominated pairs in each $D[t,V_1, V_2]$ is at most ${\sf min}\{s,p\}$ that can be maintained in time ${\sf min}\{s^2,p^2\}$. Updating the entries of the join nodes has the highest time complexity among all tree nodes, which is $\OO\left(4^\tw\cdot n^{\OO(1)}\right)$. Hence, the overall running time of our algorithm is $\OO\left(16^\tw\cdot n^{\OO(1)}\cdot {\sf min}\{s^2,p^2\}\right)$. 
\end{proof}

We now design a fully \FPT-time approximation scheme for the \minimalvcknapsack problem by rounding the values of the items so that $\alpha(\VV)$ is indeed a polynomial in $n$. The idea is to scale down the value of every vertex of the input instance so that the sum of values of the vertices that can be in the solution is polynomially bounded by input length and solve the scaled-down instance using the algorithm in \Cref{fptminimal}. We usually scale down the values by dividing by $(\eps \alpha_{\text{max}})/n$. However, this approach does not work for our problems since $\alpha_{\text{max}}$ is a lower bound on the optimal value for classical knapsack but not necessarily for our vertex cover knapsack variants. We tackle this issue by iteratively guessing upper and lower bounds of \OPT thereby incurring an extra factor of poly$\left(\sum_{v\in\VV}\alpha(v)\right)$.

\begin{theorem}\label{thm-fptas}
For every $\eps>0$, there is an $(1-\eps)$-factor approximation algorithm for \minimalvcknapsack for optimizing the value of the solution and running in time $\OO\left(16^{\tw}\cdot \text{poly}\left(n,1/\eps,\log\left(\sum_{v\in\VV}\alpha(v)\right)\right)\right)$ where \tw is the treewidth of the input graph.
\end{theorem}
\begin{proof}
Let $\II= (\GG,(w(u))_{u\in\VV},(\alpha(u))_{u\in\VV},s)$ be an arbitrary instance of \minimalvcknapsack where the goal is to output a minimal vertex cover knapsack \UU of maximum $\alpha(\UU)$ subject to the constraint that $w(\UU)\le s$. Without loss of generality, we can assume that $w(u)\le s$ for every $u\in\VV$. If not, then, letting \XX the set of vertices whose sizes are more than $s$, we remove \XX, take the neighborhood of \XX in the solution, and update bag size appropriately. If the sum of the sizes of the neighboring vertices of \XX is more than $s$, we output that there is no minimal vertex cover that fits a knapsack of size $s$. Let us define $\Phi=\sum_{v\in\VV}\alpha(v)$.

Let $\OPT(\II)$ be the maximum value of any minimal vertex cover whose total size is at most $s$. Our algorithm assumes knowledge of an integer $\lambda$ such that $\OPT(\II)\in[\lambda,2\lambda]$. Without loss of generality, we can assume that $\alpha(u)\le 2\lambda$ for every $u\in\VV$. If not, then, letting \XX the set of vertices whose values are more than $2\lambda$, we remove \XX, take the neighborhood of \XX in the solution, and update bag size appropriately. If the sum of the sizes of the neighboring vertices of \XX is more than $s$, we output that there is no minimal vertex cover that fits a knapsack of size $s$.

We construct another instance $\II^\pr=\left(\GG,(w(u))_{u\in\VV},\left(\alpha^\pr(u)=\left\lfloor \frac{n\alpha(u)}{\eps\lambda}\right\rfloor\right)_{u\in\VV},s\right)$ of \minimalvcknapsack. We compute the optimal solution $\WW^\pr\subseteq\VV$ of $\II^\pr$ using the algorithm in \Cref{fptminimal} and output $\WW^\pr$. Let $\WW\subseteq\VV$ be an optimal solution of $\II$. Clearly, $\WW^\pr$ is a valid (may not be optimal) solution of \II also, since $\alpha(\WW^\pr)\ge p$ by the correctness of the algorithm in \Cref{fptminimal}. We now prove the approximation factor of our algorithm.
\begin{align*}
     \sum_{u\in\WW^\pr}\alpha(u) &\ge \frac{\eps\lambda}{n}\mathlarger{\mathlarger{\mathlarger{\sum}}}_{u\in\WW^\pr} \left\lfloor \frac{n\alpha(u)}{\eps\lambda}\right\rfloor\\
    &\ge \frac{\eps\lambda}{n}\mathlarger{\mathlarger{\mathlarger{\sum}}}_{u\in\WW} \left\lfloor \frac{n\alpha(u)}{\eps\lambda}\right\rfloor&\text{[since $\WW^\pr$ is an optimal solution of $\II^\pr$]}\\
    &\ge \frac{\eps\lambda}{n}\mathlarger{\mathlarger{\mathlarger{\sum}}}_{u\in\WW} \left( \frac{n\alpha(u)}{\eps\lambda}-1\right)\\
    &\ge \left(\mathlarger{\sum}_{u\in\WW}\alpha(u)\right)-\eps\lambda\\
    &\ge \OPT(\II)-\eps\OPT(\II) & \text{[$\lambda\le\OPT(\II)$]}\\
    &=(1-\eps)\OPT(\II)
\end{align*}
Hence, the approximation factor of our algorithm is $(1-\eps)$. We now analyze the running time of our algorithm, assuming we know such a $\lambda$. The value of any optimal solution of $\II^\pr$ is at most
\[\sum_{u\in\VV}\alpha^\pr(u) \le \frac{n}{\eps\lambda}\sum_{u\in\VV}\alpha(u)\le \frac{n}{\eps\lambda}\sum_{u\in\VV}2\lambda=\frac{2n^2}{\eps}. \]

Of course, we do not know any such $\lambda$. We tackle this issue by running the above algorithm for every $\lambda\in\{2^i: i\in\NB\cup\{0\}, 2^i\le\Phi \}$. We know that there exists at least one value of lambda in the above set, which satisfies the assumption of the algorithm that $\OPT(\II)\in[\lambda,2\lambda]$. Hence, at least one run of the algorithm will return a $(1-\eps)$-approximate solution. We check the output of all the runs of our algorithm and output the solution that is indeed a minimal vertex cover, has size at most $s$, and has the maximum values among all valid solutions.
Therefore, the overall running time of our algorithm is $\OO\left(16^{\tw}\cdot \text{poly}(n,1/\eps,\log\left(\sum_{v\in\VV}\alpha(v)\right))\right)$.
\end{proof}

We obtain similar results for the other three variants of vertex cover knapsack for optimizing the value of the solution.

\begin{corollary}\label{vc-ptas}
For every $\eps>0$, there are $(1-\eps)$-factor approximation algorithms for \vcknapsack, \minimumvcknapsack, and \vcknapsackbudget for optimizing the value of the solution and running in time $\OO\left(2^{\tw}\cdot \text{poly}\left(n,1/\eps,\log\left(\sum_{v\in\VV}\alpha(v)\right)\right)\right)$.  
\end{corollary}

\begin{proof}
    For \vcknapsack and \vcknapsackbudget, our algorithms are similar to \Cref{thm-fptas} except, of course, we use the algorithms in respectively \Cref{vckparameterzied,vckbudparameterzied}. For \minimumvcknapsack, we do not know any polynomial-time algorithm to verify whether a given subset of vertices is a minimum vertex cover. Here, we use the algorithm for \vcknapsackbudget for all possible $n$ values of $k$ and output the solution corresponding to the minimum value of $k$ for which the algorithm returns a valid solution.
\end{proof}

It turns out that we can use a similar idea as in \Cref{thm-fptas,vc-ptas} to design an \FPT time $(1+\eps)$-approximation algorithm, parameterized by treewidth, for all the variants of vertex cover knapsack for minimizing the weight of the solution for every $\eps>0$.

\begin{theorem}\label{fptas-weight} For every $\eps>0$, we have the following. 
    \begin{enumerate}
        \item There is a $(1+\eps)$-factor approximation algorithm for \minimalvcknapsack for optimizing the weight of the solution and running in time $\OO\left(16^{\tw}\cdot \text{poly}\left(n,1/\eps,\log\left(\sum_{v\in\VV}w(v)\right)\right)\right)$ where \tw is the treewidth of the input graph.

        \item There are $(1+\eps)$-factor approximation algorithms for \vcknapsack, \minimumvcknapsack, and \vcknapsackbudget for optimizing the weight of the solution and running in time $\OO\left(2^{\tw}\cdot \text{poly}\left(n,1/\eps,\log\left(\sum_{v\in\VV}w(v)\right)\right)\right)$. 
    \end{enumerate}
\end{theorem}

\begin{proof}
    We present the algorithm for \minimalvcknapsack. The algorithms for \vcknapsack, \vcknapsackbudget, and \minimumvcknapsack are similar.

    Let $\II= (\GG,(w(u))_{u\in\VV},(\alpha(u))_{u\in\VV},p)$ be an arbitrary instance of \minimalvcknapsack where the goal is to output a minimal vertex cover knapsack \UU of minimum weight $w(\UU)$ subject to the constraint that $\alpha(\UU)\ge p$.  Let us define $\Phi=\sum_{v\in\VV}w(v)$.

    Let $\OPT(\II)$ be the maximum value of any minimal vertex cover whose total value is at least $p$. Our algorithm assumes knowledge of an integer $\lambda$ such that $\OPT(\II)\in[\lambda,2\lambda]$. Without loss of generality, we can assume that $w(u)\le 2\lambda$ for every $u\in\VV$. If not, then, letting \XX the set of vertices whose sizes are more than $2\lambda$, we remove \XX, take the neighborhood of \XX in the solution, and update bag size appropriately. If the sum of the sizes of the neighboring vertices of \XX is more than $2\lambda$, we output that there is no minimal vertex cover that fits a knapsack of size $2\lambda$. 
    
    We construct another instance $\II^\pr=\left(\GG,\left(w^\pr(u)=\left\lceil \frac{n w(u)}{2\eps\lambda}\right\rceil\right)_{u\in\VV},(\alpha(u))_{u\in\VV},p\right)$ of \minimalvcknapsack. We compute the optimal solution $\WW^\pr\subseteq\VV$ of $\II^\pr$ using the algorithm in \Cref{fptminimal} and output $\WW^\pr$. Let $\WW\subseteq\VV$ be an optimal solution of $\II$. Clearly, $\WW^\pr$ is a valid (may not be optimal) solution of \II also, since $w(\WW^\pr)\le s$ by the correctness of the algorithm in \Cref{fptminimal}. We now prove the approximation factor of our algorithm.

    \begin{align*}
         \sum_{u\in\WW^\pr}w(u) &\le \frac{2\eps\lambda}{n}\mathlarger{\mathlarger{\mathlarger{\sum}}}_{u\in\WW^\pr} \left\lceil \frac{n w(u)}{2\eps\lambda}\right\rceil\\
        &\le \frac{2\eps\lambda}{n}\mathlarger{\mathlarger{\mathlarger{\sum}}}_{u\in\WW} \left\lceil \frac{n w(u)}{2\eps\lambda}\right\rceil&\text{[since $\WW^\pr$ is an optimal solution of $\II^\pr$]}\\
        &\le \frac{2\eps\lambda}{n}\mathlarger{\mathlarger{\mathlarger{\sum}}}_{u\in\WW} \left( \frac{n w(u)}{2\eps\lambda}+1\right)\\
        &\le \left(\sum_{u\in\WW}w(u)\right)+2\eps\lambda\\
        &\le \OPT(\II)+\eps\OPT(\II) & \text{[$2\lambda\le\OPT(\II)$]}\\
        &=(1+\eps)\OPT(\II)
    \end{align*}

    Hence, the approximation factor of our algorithm is $(1+\eps)$. We now analyze the running time of our algorithm, assuming we know such a $\lambda$. The value of any optimal solution of $\II^\pr$ is at most
    \[\sum_{u\in\VV}w^\pr(u) \le \frac{n}{2\eps\lambda}\sum_{u\in\VV}w(u)\le \frac{n}{2\eps\lambda}\sum_{u\in\VV}2\lambda=\frac{n^2}{\eps}. \]

    Again, as in \Cref{thm-fptas}, running the above algorithm for every $\lambda\in\{2^i: i\in\NB\cup\{0\}, 2^i\le\Phi \}$ and outputting the best valid solution gives a $(1+\eps)$-approximation algorithm with runtime $\OO\left(16^{\tw}\cdot \text{poly}(n,1/\eps,\log\left(\sum_{v\in\VV}w(v)\right))\right)$.
\end{proof}
\section{Conclusion}
\label{conclusion}
 We have studied the classical Knapsack problem with the graph theoretic constraints, namely vertex cover and its interesting variants like \minimumvcknapsack, \minimalvcknapsack, and \vcknapsackbudget. We further generalize this to hypergraphs and study \setck and \hsk. We have presented approximation algorithms for minimizing the size of the solution and proved that the approximation factors are the best possible that one hopes to achieve in polynomial time under standard complexity-theoretic assumptions. However, to maximize the value of the solution, we obtain strong inapproximability results. Fortunately, we show that there exist \FPT algorithms parameterized by the treewidth of the input graph (for vertex cover variants of knapsack), which can achieve $(1-\eps)$-approximate solution.

\newpage
\bibliography{references}

\appendix
\

\end{document}